\newcommand\bits{\{0,1\}}
\newcommand{\eps}{\varepsilon}
\newcommand{\ket}[1]{|#1\rangle}
\newcommand{\ketbra}[2]{|#1\rangle\langle#2|}
\newcommand{\inpc}[2]{\langle{#1},{#2}\rangle} %
\newcommand{\norm}[1]{{\left\|{#1}\right\|}}
\newcommand{\HM}{\mbox{\rm HM}}
\newcommand{\E}{\mathop{\mathbb E}}
\newcommand{\M}{\mathcal{M}}
\newcommand{\exprefsilent}[2]{\texorpdfstring{\hyperref[#2]{#1}}{#1}}
\runningtitle{Near-Optimal and Explicit Bell Inequality Violations}
\runningauthor{Harry Buhrman, Oded Regev, Giannicola Scarpa, and Ronald de Wolf}
\begin{document}

\begin{frontmatter}
\title{Near-Optimal and Explicit\\ Bell Inequality Violations\footnote{An 
earlier version of this paper appeared in the 
\href{http://dx.doi.org/10.1109/CCC.2011.30}{Proceedings of the 
26th IEEE Conference on Computational Complexity, pages 157--166, 2011.}}}

\author[buhrman]{Harry Buhrman\thanks{Supported by a Vici grant from the Netherlands Organisation for Scientific Research (NWO), and by the European Commission under the project QCS (Grant No.~255961).}}
\author[regev]{Oded Regev\thanks{%
   Supported by the Israel Science Foundation,
   by the Wolfson Family Charitable Trust, and by a European Research Council (ERC) Starting Grant.
   Part of the work done while a DIGITEO visitor in LRI, Orsay.}}
\author[scarpa]{Giannicola Scarpa\thanks{Supported by a Vidi grant from the Netherlands Organisation for Scientific Research (NWO), and by the European Commission under the project QCS (Grant No.~255961).}}
\author[dewolf]{Ronald de Wolf\thanks{Supported by a Vidi grant from the Netherlands Organisation for Scientific Research (NWO), and by the European Commission under the project QCS (Grant No.~255961).}}

\begin{abstract}
Entangled quantum systems can exhibit correlations that cannot be simulated classically.
For historical reasons such correlations are called ``Bell inequality violations.''
We give two new two-player games with Bell inequality violations
that are stronger, fully explicit, and arguably simpler than earlier work.

The first game is based on the Hidden Matching problem of quantum communication complexity,
introduced by Bar-Yossef, Jayram, and Kerenidis.
This game can be won with probability~1 by a strategy using
a maximally entangled state with local dimension $n$ (\eg, $\log n$ EPR-pairs),
while we show that the winning probability of any classical strategy differs
from ${1}/{2}$ by at most $O((\log n)/\sqrt{n})$.

The second game is based on the integrality gap for Unique Games by Khot and Vishnoi and the quantum rounding
procedure of Kempe, Regev, and Toner.  Here $n$-dimensional entanglement allows the game to be won with 
probability $1/(\log n)^2$, while the best winning probability without entanglement is $1/n$. 
This near-linear ratio is almost optimal, both in terms of the local dimension 
of the entangled state, and in terms of the number of possible outputs of the two players.
\end{abstract}

\tocacm{J.2} %
\tocams{81P68}

\tockeywords{quantum communication, nonlocal games, Bell inequalities, Fourier analysis}

\end{frontmatter}

\section{Introduction}

One of the most striking features of quantum mechanics is the fact that
\emph{entangled} particles can exhibit correlations that cannot be reproduced or
explained by classical physics, or more precisely, by ``local hidden-variable theories.''
This was first noted by Bell~\cite{bell:epr}
in response to Einstein, Podolsky, and Rosen's challenge to the completeness of quantum mechanics~\cite{epr}.
Experimental realization of such correlations is the strongest proof
we have that nature does not behave according to classical physics:
nature cannot simultaneously be ``local'' (meaning that information does not
travel faster than the speed of light) and ``realistic'' (meaning that measurable properties
of particles such as its spin always have a definite---if possibly unknown---value).
Many such experiments have been done. All behave in accordance with 
the predictions of quantum mechanics, though so far none has closed all ``loopholes'' that would allow
some (usually very contrived) classical explanation of the observations based
on imperfect behavior of, for instance, the photon detectors used.

Here we study quantitatively \emph{how much} such correlations obtained from entangled quantum systems
can deviate from what is achievable classically. It will be convenient to describe our results in terms of
two-player \emph{games}, which are described as follows.
Two non-communicating parties, called Alice and Bob, receive inputs $x$ and $y$ according
to some fixed and known probability distribution $\pi$, and are required to produce outputs $a$ and $b$, respectively.
There is a predicate specifying which outputs $a,b$ are correct on inputs $x,y$.
The definition of a game $G$ consists of this predicate and the distribution $\pi$. 
The goal is to design games where entangled strategies have much higher winning probability
than the best classical strategy.
While this setting is used to study non-locality in physics, the same set-up
is also used extensively to study the power of entanglement in computer science
contexts like multi-prover interactive proofs~\cite{kkmtv:entangledprovers,kkmv:usingent},
parallel repetition~\cite{csuu:parallelxorj,KRT08}, and cryptography.

Entangled strategies start out with an arbitrary fixed entangled
state.  No communication takes place between Alice and Bob.  For each
input $x$, Alice has a measurement, and for each input $y$, Bob has a
measurement.  They apply the measurements corresponding to $x$ and $y$
to their halves of the entangled state, producing classical outputs
$a$ and $b$, respectively.  Their goal is to maximize the winning
probability.  The \emph{entangled value} $\omega^*(G)$ of the game is
the supremum of the winning probability, taken over all entangled
strategies.  When restricting to strategies that use entanglement of
local dimension~$n$, the value is denoted $\omega^*_n(G)$.  This
should be contrasted with the \emph{classical value}
$\omega(G)=\omega^*_1(G)$ of the game, which is the maximum among all
classical, non-entangled strategies.  Shared randomness between the
two parties is allowed, but is easily seen not to be beneficial.

The remarkable fact, alluded to above, that some entanglement-based
correlations cannot be simulated classically, corresponds to the fact
that there are games $G$ for which the entangled value $\omega^*(G)$
is strictly larger than the classical value $\omega(G)$.  The CHSH
game is one particularly famous example~\cite{chsh}.  Here, the inputs
$x\in\bits$ and $y\in\bits$ are uniformly distributed, and Alice and Bob
win the game if their respective outputs $a\in\bits$ and $b\in\bits$
satisfy $a\oplus b=x\wedge y$; in other words, $a$ should equal $b$
unless $x=y=1$.  The classical value of this game is easily seen to be
$\omega(G)=3/4$, while the entangled value is known to be
$\omega^*(G)=1/2+1/(2\sqrt{2})\approx 0.85$.  The entangled value is
achieved already with 2-dimensional entanglement (\ie, one EPR-pair),
so $\omega^*(G)=\omega^*_2(G)$ for this game~\cite{tsirelson87}.

One common figure of merit of a game is that of the \emph{Bell inequality violation} exhibited by a game, which is defined as the ratio of entangled and classical values. More generally, we allow to replace the \emph{values} (which are the maximum winning probabilities) by \emph{biases} around some arbitrary ``center'' $p \in [0,1]$, where by bias we mean the maximum distance of the winning probability from $p$. For instance, by using $p=1/2$ as the center,
one can see that the CHSH game above exhibits a Bell inequality violation of $\sqrt{2}$.\footnote{Notice that this requires that the winning probability of non-entangled players is always between $1/4$ and $3/4$, which is clearly the case.}
In \expref{Section}{sec:formalbell} we explain the origin of the term ``Bell inequality violation,'' define it more formally, and explain the close relationship between games and Bell inequalities. 

In two recent papers, Junge et al.~\cite{PerezGarcia09arxiv,junge&palazuelos:largeviolation}
studied how large a Bell inequality violation one can obtain.
In terms of upper bounds, \cite{PerezGarcia09arxiv} proved that
the maximum Bell inequality violation $\omega^*_n(G)/\omega(G)$
obtainable with entangled strategies of local dimension $n$, is at most $O(n)$,
and~\cite[Theorem~6.8]{junge&palazuelos:largeviolation} proved that if Alice and Bob have
at most $k$ possible outputs each, then the violation $\omega^*(G)/\omega(G)$
is at most $O(k)$, irrespective of the amount of entanglement they can use.
(This improved an earlier $O(k^2)$ upper bound due to Degorre et al.~\cite{dklr:nonsignal},
and was also obtained for the special case of games by Dukaric~\cite[Theorem~4]{dukaric:norm}.)

In terms of lower bounds, \cite{PerezGarcia09arxiv} showed the existence of a Bell 
inequality violation of $\Omega(\sqrt{n}/(\log n)^2)$, 
where $n$ is both the entanglement dimension and the number of outputs of Alice and Bob.
This was improved to $\sqrt{n}/\log n$ in~\cite{junge&palazuelos:largeviolation}.
Both constructions are probabilistic, and the proofs show that with high probability
the constructed games exhibit a large violation, yielding the existence of such games, without giving an explicit formulation.
Their proofs are heavily based on connections to the mathematically beautiful areas of
Banach spaces and operator spaces, but as a result are arguably somewhat inaccessible to those unfamiliar
with these areas, and it is difficult to get a good intuition for them.
(It is actually possible to analyze their game and reprove many of their results---often with 
improved parameters---using elementary probabilistic techniques~\cite{regev:bell}.)

Our main result in this paper is to exhibit two fully explicit games with strong Bell inequality violations.
The first achieves the same violation as~\cite{junge&palazuelos:largeviolation}, namely $\sqrt{n}/\log n$, where $n$ is both the number of possible outputs and the dimension of entanglement.
The second achieves the much stronger violation of $n/(\log n)^2$, which is optimal up to a polylogarithmic factor by the results of~\cite{PerezGarcia09arxiv,junge&palazuelos:largeviolation}.
Even though the second game gives a much stronger violation, the first one still has some merit; for instance,
entanglement allows the players to win it with certainty. 
Interestingly, although addressing a question in mathematical physics,
both games are inspired by earlier work in theoretical computer science (communication complexity and unique games, 
respectively), and so is their analysis. 
In the remainder of this introduction we provide an overview of the two games,
followed by some discussion and comparison.

\subsection{The Hidden Matching game}

The ``Hidden Matching'' problem was introduced in quantum communication complexity
by Bar-Yossef et al.~\cite{bjk:q1wayj}, and many variants
of it were subsequently studied~\cite{gkrw:identificationj,gkkrw:1wayj,gavinsky:interactionvsnonlocality}.
The original version is as follows, where it should be noted that now we allow communication, in contrast to the setting of non-local games.
Let $n$ be a power of 2. Alice is given input $x \in\bits^n$ and Bob is given a perfect matching $M$,
\ie, a partition of the set $[n]=\{1,\ldots,n\}$ into $n/2$ disjoint pairs $\{i,j\}$.
Both inputs are uniformly distributed.\footnote{\label{notechangehm}All our results
also hold with minor modifications for the case that Bob's matching is chosen
uniformly from the set $\{M_k \mid k\in\{0,\ldots,n/2-1\}\}$, where the matching $M_k$ consists of the pairs $\{i,j\}$
where $i\leq n/2$ and $j=n/2+1+(i+k-1 \bmod n/2)$. This has the advantage of lowering
the number of possible inputs to Bob to $n/2$. The main thing is to notice that equation~\eqref{eqqmij} still holds
with respect to this new distribution on Bob's matching if we replace the right-hand side by $2/n$, and the rest of the proof goes through.}
We allow one-way communication from Alice to Bob, and Bob is required to output a pair $\{i,j\} \in M$ and a bit $v\in \bits$.
They win if $v = x_i \oplus x_j$.

In \expref{Section}{ssechmcomm} we show that if Alice sends Bob a $c$-bit message, then their optimal winning probability
is ${1}/{2}+\Theta({c}/{\sqrt{n}})$.
Bar-Yossef et al.~\cite{bjk:q1wayj} earlier proved this for $c=\Theta(\sqrt{n})$, using information theory.
However, their tools seem unable to give good bounds on the success probability for much smaller $c$.
Instead, the main mathematical tool we use in our analysis is the so-called ``KKL inequality''~\cite{kkl:influence}
from Fourier analysis of Boolean functions. Roughly speaking, this inequality implies that if the message that
Alice sends about $x$ is short, then Bob will not be able to predict the parity $x_i\oplus x_j$
well for many $\{i,j\}$ pairs. His matching $M$ is uniformly distributed, independent of $x$,
and contains only $n/2$ of all $\binom{n}{2}$ possible $\{i,j\}$ pairs.
Hence it is unlikely that he can predict any one of those $n/2$ parities well.
The KKL inequality was used before to analyze another variant of Hidden Matching
in~\cite{gkkrw:1wayj}, though their analysis is different and more complicated
because their variant of Hidden Matching is a promise problem with a non-product input distribution.

The following non-local version of the Hidden Matching problem (and the entangled strategy for it)
is originally due to Buhrman, and related problems were studied in~\cite{gkrw:identificationj,gavinsky:interactionvsnonlocality}.

\begin{definition}[Non-Local Hidden Matching Game ($\HM^{\textsc{nl}}_n$)]\label{defhmnl}
Let $n$ be a power of 2 and $\M_n$ be the set of all perfect matchings on the set $[n]$.
Alice is given $x \in \bits^n$ and Bob is given $M\in \M_n$, both distributed according to the uniform distribution.
Alice and Bob do not communicate.
Alice's output is a string $a \in \bits^{\log n} $ and Bob's output is an $\{i,j\} \in M$ and $d \in \bits$. 
They win the game if and only if
\begin{equation}\label{eq:HMnlcondition}
(a \cdot(i \oplus j)) \oplus d = x_i \oplus x_j\,,
\end{equation}
where the dot indicates inner product (modulo~2) of two $\log n$-bit strings.
\end{definition}

Observe that Alice has $n$ possible outputs~$a$ and Bob has $2\cdot n/2=n$ possible outputs~$(\{i,j\},d)$ given his matching.

A classical strategy that wins this game induces a protocol
for the original Hidden Matching problem with communication $c=\log n$ bits
and the same winning probability $p$, as follows. Alice sends Bob the $\log n$-bit output $a$ from
the non-local strategy, Bob computes $v=(a \cdot(i\oplus j))\oplus d$ and outputs $(\{i,j\},v)$. 
We have that $v=x_i\oplus x_j$ with probability $p$. 
Hence, our bound for the original communication problem implies that no classical strategy
can win with probability that differs from $1/2$ by more than $O({(\log n)}/{\sqrt{n}})$.

In contrast, there is a strategy that wins with probability~1 using $\log n$ EPR-pairs,
which shows $\omega^*_n(G)=1$.\footnote{The reader might be a bit confused by the seeming overloading of the meaning of `$n$'. 
Formally, `$n$' is a parameter in the specification of the game. As it happens, for both of our games it is also the number 
of possible outputs for each player, \emph{and} the local dimension of the entangled state that our strategy uses
(though we do not claim that this entanglement-dimension $n$ is necessary to achieve the best-possible entangled value).}
This game therefore exhibits a Bell violation of $\Omega(\sqrt{n}/\log n)$ (by measuring
the maximal deviation of the winning probability from $1/2$).
This order is the same as that obtained by Junge et al.~\cite{PerezGarcia09arxiv,junge&palazuelos:largeviolation},
but our game is fully explicit and arguably simpler
(which would help any future experimental realization).
One might feel that our reduction to a communication complexity lower bound is 
responsible for losing the $\log n$ factor; however in \expref{Theorem}{thm:lowerHMnl} 
we exhibit a classical strategy with winning probability $1/2 + \Omega(\sqrt{(\log n)/n})$. 
This shows that at least the square root of the log-factor is really necessary.

\subsection{The Khot-Vishnoi game}\label{ssec:kvgame}

Our second non-local game derives from the work of Khot and Vishnoi~\cite{khot&vishnoi:cutproblems}
on the famous \emph{Unique Games Conjecture} (UGC), which was introduced by Khot~\cite{khot:ugc}.
Although not necessary for the rest of this paper, we now provide some background and motivation. 
Roughly speaking, the UGC says that approximating the classical value of so-called unique games is 
a hard problem, even if we are only interested in a very rough approximation that can tell the difference
between value less than $\eps$ and value more than $1-\eps$. 
This conjecture implies many other hardness-of-approximation results
that do not seem obtainable using the more standard techniques based on the PCP theorem.
Khot and Vishnoi considered the standard semidefinite programming (SDP) relaxation of the classical value
and showed that there are games for which it provides a very poor approximation, in the sense
that the classical value is close to~$0$, yet the SDP relaxation is close to~$1$. This so-called integrality gap 
demonstrates that the standard SDP relaxation, which can be computed efficiently, does not lead to an algorithm
contradicting the UGC.

Kempe, Regev, and Toner~\cite{KRT08} already observed that they could
combine their ``quantum rounding'' technique with the game of~\cite{khot&vishnoi:cutproblems} to get
a game with $n$ possible outputs exhibiting a Bell inequality violation of $n^\eps$ for some small constant $\eps>0$, using entanglement
dimension $n$.
Our main contribution in the second part of this paper
is a refined (and at the same time simpler) analysis of both the Khot-Vishnoi game and of the quantum rounding technique.
We show that, somewhat surprisingly, nearly optimal violations can be obtained using this method.

We first give a precise definition of the Khot-Vishnoi (KV) game.

\begin{definition}[Khot-Vishnoi Game (KV$_n$)]\label{defkvgame}
The game is parametrized by an integer $n$, which we assume to be a power of~2, and a ``noise parameter'' $\eta\in[0,1/2]$. 
Consider the group $(\{0,1\}^n,\oplus)$ of $n$-bit strings together with bitwise addition mod~2, and let
$H$ be the subgroup containing the $n$ Hadamard codewords.\footnote{For $a\in\bits^{\log n}$, the corresponding $n$-bit Hadamard codeword is defined as $h(a):=(a\cdot j)_{j\in\bits^{\log n}}$.} This subgroup partitions $\{0,1\}^n$ into $2^n/n$ cosets of $n$ elements each. Alice receives a uniformly
random coset $x$ as input, which we can think of as $u\oplus H$ for uniformly random $u\in\bits^n$.
Bob receives a coset $y$ obtained from Alice's by adding a string of low Hamming weight,
namely $y=x\oplus z=u\oplus z\oplus H$, where each bit of $z\in\bits^n$ is set to~1 with probability $\eta$, independently of the other bits. 
Addition of $z$ gives a natural bijection between the two cosets, mapping each element of the first
coset to a relatively nearby element of the second coset; namely, the distance between the two elements
is the Hamming weight of~$z$, which is typically around $\eta n$. Each player is supposed to output one element
from its coset, and their goal is for their elements to match under the bijection.
In other words, Alice outputs an element $a \in x$, Bob outputs $b\in y$, and they win the game if and only if $a\oplus b=z$.%
\footnote{Note that the winning condition for this game is a ``randomized predicate,''
as there are $n$ possible ways to obtain the same $y$ from $x$, hence there are $n$ possible winning predicates (one for each $z\in x\oplus y$)
corresponding to each pair of inputs $x,y$. Strictly speaking, this requires a slightly more general definition of a game 
than the one given in the introduction; see the definition in \expref{Section}{sec:formalbell}. 
Although not relevant for any of our applications, we mention that one can 
modify the game in a straightforward manner, making it a game with a deterministic predicate.
The thing to observe is that with very high probability
exactly one of the $n$ predicates dominates, namely the one corresponding to a $z$ of Hamming
weight around $\eta n$.}
\end{definition}

Notice that the number of possible inputs to each player is $2^n/n$ and the number of possible outputs for each player is $n$.

Based on the integrality gap analysis of Khot and Vishnoi, 
in \expref{Section}{seckhotvishnoi} we show that no classical strategy can win this game with probability greater than $1/n^{\eta/(1-\eta)}$. 
We also sketch a classical strategy that achieves this winning probability up to lower order terms.
In contrast, using a simplified version of the ``quantum rounding'' technique of~\cite{KRT08},
we exhibit an entangled strategy that uses the $n$-dimensional maximally entangled state and wins with probability at least $(1-2\eta)^2$.
This strategy follows from the observation that each coset of $H$ defines an orthonormal basis of $\R^n$ in which
we can do a measurement.
Summarizing, we have entangled value $\omega^*_n(G)\geq (1-2\eta)^2$ and classical value $\omega(G)\leq 1/n^{\eta/(1-\eta)}$ for this game.
Setting the noise-rate to $\eta = 1/2 - 1/\log n$, the entangled value is $\Omega(1/(\log n)^2)$ while the classical value is $O(1/n)$, leading to a Bell inequality violation $\omega^*_n(G)/\omega(G)=\Omega(n/(\log n)^2)$.
Up to the polylogarithmic factor, this is optimal both in terms of the local dimension,
and in terms of the number of possible outputs.

Palazuelos~\cite{palazuelos:superactivation} recently used our result to prove an interesting \emph{super-activation} result.
He identified a constant-dimensional quantum state (a mixture of the maximally entangled state of local dimension~8 with the completely mixed state)
that cannot be used to violate any Bell inequality,
but a sufficiently large number of copies of which \emph{can} be used to violate a Bell inequality---namely the one associated with our KV game.

\subsection{Discussion and open problems}

Although Bell violations provide an elegant way to quantify the non-locality exhibited in a game, 
in experimental realizations of such games it is often important to take into account the actual classical and entangled values, and not just their ratio, especially when one tries to take into account possible imperfections in the experimental set-up.
The large Bell violation and the tiny success probability achievable by classical players seem to make the KV game attractive to an experimental realization.
One should keep in mind, though, that the success probability achievable by entangled players is $1/(\log n)^2$, which is somewhat low in absolute terms, and might not be visible if the experiment has too many false positives. It also means that an experiment must be repeated about $(\log n)^2$ times before we expect to see the first win. In the HM game, on the other hand, entangled players can win with certainty, which seems beneficial in case there are few false negatives. Another advantage of the HM game over KV is its somewhat simpler description.

One natural open question is to improve the Bell violation of $n/(\log n)^2$ achieved by the KV game 
either by tweaking the game or defining another game, possibly even matching the $O(n)$ upper bound up to a constant factor.%
\footnote{Interestingly, very recently Palazuelos~\cite{palazuelos:limitofmaxentstate} showed that this cannot 
be done using the maximally entangled state (which is the state used in all our entangled strategies): he proved that the ratio between the optimal value of entangled strategies 
using the maximally entangled state of local dimension~$n$, and the classical value, is at most $O(n/\sqrt{\log n})$.}
Throughout this paper we considered the Bell violation as a function of the number of outputs
of the players and/or of the dimension of entanglement. One can also analyze the violation
in terms of the number of possible \emph{inputs}. We recall that in the KV game both
players have inputs taken from an exponentially large set, and that in the HM
game (when modified as in \expref{footnote}{notechangehm}) Bob has only $n/2$ possible inputs,
but Alice still has an exponentially large set of inputs.
The Bell inequality violation of $\sqrt{n}/\log n$ presented by Junge and Palazuelos~\cite{junge&palazuelos:largeviolation}
has the advantage that the number of inputs is only $O(n)$.
Accordingly, another open question presents itself: can we find a game with a (near-)linear
Bell inequality violation, and linear number of inputs and outputs for both Alice and Bob?

Finally, while this paper focuses on the two-party setting,
obtaining stronger Bell inequality violations for settings with three or more parties is also a worthwhile goal.
P{\'e}rez-Garc{\'i}a et al.~\cite{PerezGarcia08} (see also~\cite{brietvidick11}) gave a randomized construction of a three-party \emph{XOR game} (in such a game each party outputs a bit, and winning or losing depends only on the XOR of those three bits)
that gives a Bell inequality violation of $\Omega(\sqrt{d})$ using an entangled state in dimensions $d\times D\times D$ with $D\gg d$.%
\footnote{They also showed that using GHZ states, there is no superconstant
Bell inequality violation for XOR games (see also~\cite{Briet2009blv}.)}
In contrast, it is a known consequence of Grothendieck's inequality that such non-constant separations
do not exist for \emph{two}-party XOR games.
We do not know how large Bell inequality violations can be for arbitrary three-party games.

Note that it is easy to make a three-party version of the Hidden Matching game: Alice gets input $x\in\bits^n$, Bob gets input $y\in\bits^n$, and Charlie gets a matching $M$ as input, all uniformly distributed.
The goal is that Alice outputs $a\in\bits^{\log n}$, Bob outputs $b\in\bits^{\log n}$,
Charlie outputs $d\in\bits$ and $\{i,j\}\in M$,
such that $((a\oplus b)\cdot (i\oplus j))\oplus d=x_i\oplus x_j\oplus y_i\oplus y_j$.
By modifying the two-party proofs in this paper, it is not hard to show that the winning probability
using an $n$-dimensional GHZ state is~1, while the best classical winning probability deviates
from $1/2$ by at most $O((\log n)^2/n)$.  So going from two to three parties squares the Bell inequality violation for Hidden Matching. This improvement unfortunately does not scale up with more than three parties, because one can show the classical winning probability is always at least $1/2+\Omega(1/n)$.

\section{Preliminaries}

\subsection{Fourier analysis}\label{secfourier}

The crucial technical tool used in the analysis of both of our games is Fourier analysis on the Boolean cube.  
We will just introduce what we need here, referring to~\cite{odonnell:survey,wolf:fouriersurvey} for more details and references.
Unless mentioned otherwise, expectations and probabilities are taken over a uniformly random $x\in\bits^n$.
Define the inner product between functions $f,g:\bits^n\rightarrow\mathbb{R}$ as
\[
\inpc{f}{g}=\frac{1}{2^n}\sum_{x\in\bits^n}f(x)g(x)=\E[f(x)\cdot g(x)]\,.
\]
For $S\subseteq[n]$, the function $\chi_S(x)=(-1)^{\sum_{i\in S}x_i}$ is the parity (represented as a sign) of the variables indexed in $S$.
These $2^n$ functions (one for each $S$) form an orthonormal basis for the space of all real-valued functions on the Boolean cube.
The \emph{Fourier coefficients} of $f$ are $\widehat{f}(S)=\inpc{f}{\chi_S}$, and we can write $f$ in
its Fourier decomposition
\[
f=\sum_{S\subseteq[n]}\widehat{f}(S)\chi_S\,.
\]
Since the $\chi_S$ form an orthonormal basis, it is easy to show
\begin{equation}\label{eq:plancherel}
\inpc{f}{g}=\sum_S\widehat{f}(S)\widehat{g}(S)\,.
\end{equation}
For $p \ge 1$, the \emph{$p$-norm} of $f$ is defined as
\[
\norm{f}_p=\E[|f(x)|^p]^{1/p}\,.
\]
This is monotone non-decreasing in $p$.
For $p=2$, equation~\eqref{eq:plancherel} with $g=f$ gives Parseval's identity: 
\[
\norm{f}_2^2=\sum_S\widehat{f}(S)^2\,.
\]
For $\rho \in [-1,1]$, the \emph{noise-operator} $T_{\rho}$ adds ``$\eta$-noise'' to each of the input bits, where $\eta=(1-\rho)/2$.
More precisely, the function $T_{1-2\eta} f$ is defined as  
\[
(T_{1-2\eta} f)(x)=\E_z[f(x\oplus z)]\,,
\] 
where $z\in\bits^n$ is an ``$\eta$-biased noise string,'' \ie, each bit $z_i$ is set to~1 with probability $\eta$, independently of the other bits.
The linear operator $T_\rho$ is diagonal in the Fourier basis:
it just multiplies each $\chi_S$ by the factor $\rho^{|S|}$.
Equivalently, 
\begin{equation}\label{eq:effectnoiseonfourier}
\widehat{T_\rho f}(S)=\rho^{|S|}\widehat{f}(S)\,.
\end{equation}
Since $T_\rho f$ is a convex combination of shifted copies $f_z(x)=f(x\oplus z)$ of $f$, 
by the triangle inequality we see that $T_\rho$ is a contraction:  $\norm{T_\rho f}_p\leq\norm{f}_p$ for all $p \ge 1$, $\rho \in [-1,1]$, and $f$.
The \emph{Bonami-Beckner hypercontractive inequality} implies $\norm{T_\rho f}_q\leq\norm{f}_p$ even for $q$ somewhat bigger than~$p$.

\begin{theorem}[Bonami-Beckner]\label{thbonamibeckner}
For $f:\bits^n\rightarrow\mathbb{R}$, $1\leq p\leq q$ and $\rho^2\leq(p-1)/(q-1)$, we have
\[
\norm{T_\rho f}_q\leq\norm{f}_p\,.
\]
\end{theorem}

An important consequence of the hypercontractive inequality is the so-called KKL inequality~\cite{kkl:influence}.

\begin{theorem}[KKL]\label{thkkl}
For $f:\bits^n\rightarrow\{-1,0,+1\}$ and $\delta\in[0,1]$, we have
\[
\sum_S\delta^{|S|}\widehat{f}(S)^2\leq \Pr[f(x)\neq 0]^{2/(1+\delta)}\,.
\]
\end{theorem}

\begin{proof}
Applying \expref{Theorem}{thbonamibeckner} with $q=2$, $p=1+\delta$, and $\rho=\sqrt{\delta}$
gives $\norm{T_\rho f}_2\leq\norm{f}_p$.  Note that because of the range of $f$, the right-hand side equals $\Pr[f(x)\neq 0]^{1/(1+\delta)}$.
Squaring both sides and rewriting the left-hand side using Parseval's identity completes the proof. 
\end{proof}

\subsection{A more formal look at Bell violations}\label{sec:formalbell}

Before we analyze the two games mentioned above, let us first say something more about
the mathematical treatment of general Bell inequalities.
Readers who are content with the above (more concrete) approach in terms of winning probabilities of games,
may safely skip this section.

Consider a game with $n$ possible inputs to each player and $k$ possible outputs.
The behavior of the players (irrespective of whether they use a classical or an entangled strategy)
can be summarized in terms of $n^2$ probability distributions, each on the set $[k] \times [k]$.
We denote by $P(ab\mid xy)$ the probability of producing outputs $a$ and $b$ when given inputs $x$ and $y$, 
with respect to a fixed strategy.
As described in the introduction, a game is defined by a probability distribution $\pi$ on
the input set $[n] \times [n]$, as well as a (possibly randomized) predicate on $[k] \times [k]$
for each input pair $(x,y)$.
The winning probability of the players can be written as
$$
\inpc{M}{P}=\sum_{abxy} M^{ab}_{xy}P(ab\mid xy)\,.
$$
where $M_{xy}^{ab}$ is defined as the probability of the input pair $(x,y)$ multiplied
by the probability that the output pair $(a,b)$ is accepted on this input pair.
We call $M=(M_{xy}^{ab})$ the \emph{Bell functional corresponding to the game}.
More generally, a \emph{Bell functional} is an arbitrary tensor $M=(M_{xy}^{ab})$
containing $n^2 k^2$ real numbers.

We define the \emph{classical value} of a Bell functional $M$ as
$$
\omega(M)=\sup_{P}|\inpc{M}{P}|\,,
$$
where the supremum is over all distributions $P$ representing classical strategies.
Similarly, the \emph{entangled value} of $M$ is defined as
$$
\omega^*(M)=\sup_{P}|\inpc{M}{P}|\,,
$$
where the supremum now is over all entangled strategies (using an entangled state of arbitrary dimension).
If the entangled state is restricted to local dimension $n$, the value is denoted $\omega^*_{n}(M)$.
We note that if $M$ is the Bell functional corresponding to a game, then these
definitions coincide with our definitions from the introduction, and in this case
the absolute value is unnecessary since $M$ is non-negative.

A \emph{Bell inequality} is an upper bound on $\omega(M)$ for some Bell functional $M$;
it shows a limitation of \emph{classical} strategies.\footnote{An upper bound on $\omega^*(M)$ is
known as a \emph{Tsirelson inequality}, and shows a limitation of entangled strategies.}
The \emph{Bell inequality violation} demonstrated by a Bell functional $M$ is defined as
the ratio between the entangled and the classical value
$$ 
\frac{\omega^*(M)}{\omega(M)}\,.
$$
This provides a convenient quantitative way to measure the extra power provided
by entanglement. This definition of Bell violation enjoys a rich mathematical structure,
as witnessed by the numerous connections found to Banach space and operator space theory~\cite{PerezGarcia09arxiv,junge&palazuelos:largeviolation,dukaric:norm}, and also has a beautiful geometrical interpretation as the ``distance'' between the set of
all classical strategies and the set of all entangled strategies. (See Section~6.1 in~\cite{junge&palazuelos:largeviolation}.)

Clearly, any game $G$ for which $\omega^*(G) \ge K \omega(G)$ gives a Bell violation of $K$ by just
taking the functional corresponding to $G$. But recall that in the introduction we said that one is also allowed 
to look at the ratio of biases around some center probability $p$. We now explain why this still agrees with
the above definition of Bell violation. We claim that if $G$ is a game for which the winning probability of
any classical strategy cannot deviate from $p$ by more than $\delta_1$ and, moreover,
there is an entangled strategy obtaining winning probability at least $p+\delta_2$ (or at most $p-\delta_2$),
then we obtain a Bell violation of $\delta_2/\delta_1$. To see why, let $M$ be the functional
corresponding to the game, and let $M'$ be the functional obtained by subtracting
from each $M_{xy}^{ab}$ the probability of input pair $(x,y)$ times $p$. Then it is
easy to see that for any strategy $P$, $\inpc{M'}{P}=\inpc{M}{P}-p$. Hence,
$\omega(M')$ and $\omega^*(M')$ are exactly the bias around $p$ of classical and entangled strategies, respectively,
and the claim follows. The converse to this statement is also true:
any Bell functional $M$ can be converted to a game in such a way that the ratio between the
entangled bias and the classical bias of the game (both around $1/2$, say) is exactly
the Bell violation demonstrated by the functional. To prove this, consider the game
in which an input pair $(x,y)$ is chosen uniformly, and outputs $a,b$ are accepted with probability
$1/2+\delta M_{xy}^{ab}$ for some sufficiently small $\delta>0$ so that all these probabilities are in $[0,1]$.

\section{Hidden Matching game}\label{sechiddenmatching}

In this section we define and analyze the Hidden Matching game.
Here and below, unless stated otherwise, all probabilities and expectations are taken over the distributions on $x$ and $M$ specified by the game.

\subsection{The Hidden Matching problem in communication complexity}\label{ssechmcomm}

While our focus is non-locality, it will actually
be useful to first study the original version of the Hidden Matching problem in the context of protocols
where communication from Alice to Bob is allowed.
Both the problem and the efficient quantum protocol below come from~\cite{bjk:q1wayj}. 

\begin{definition}[Hidden Matching ($\HM_n$)]
Let $n$ be a power of 2 and $\M_n$ be the set of all perfect matchings on the set $[n]$. Alice is given $x \in\bits^n$ and Bob is given $M\in \M_n$, both distributed according to the uniform distribution. We allow one-way communication from Alice to Bob, and Bob outputs an $\{i,j\} \in M$ and $v\in \bits$. They win if $v = x_i \oplus x_j$.
\end{definition}

\begin{theorem}\label{thm:quantumHM}
For every $n$ that is a power of~2, there is a protocol for $\HM_n$ with $\log n$ qubits of one-way communication that wins with probability~1
(\ie, $v = x_i \oplus x_j$ always holds).
\end{theorem}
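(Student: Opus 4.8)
The plan is to exhibit a quantum one-way protocol that wins Hidden Matching with certainty using only $\log n$ qubits, mirroring the original construction of Bar-Yossef et al.\ \cite{bjk:q1wayj}. The key idea is that a single $\log n$-qubit message can encode the entire input $x$ in phases, and that Bob's matching $M$ naturally selects a measurement basis in which the relevant parity $x_i \oplus x_j$ can be read off deterministically.

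\medskip

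First, Alice prepares and sends the $\log n$-qubit state
$$
\ket{\psi_x} = \frac{1}{\sqrt{n}} \sum_{i=1}^{n} (-1)^{x_i} \ket{i},
$$
where $\ket{i}$ ranges over an orthonormal basis of the $n$-dimensional space. This costs exactly $\log n$ qubits and encodes all of $x$ in the signs of the amplitudes. Next, upon receiving $M$, Bob performs a measurement in the orthonormal basis consisting, for each edge $(i,j) \in M$, of the two vectors
$$
\ket{e^{+}_{ij}} = \tfrac{1}{\sqrt{2}}\big(\ket{i} + \ket{j}\big), \qquad
\ket{e^{-}_{ij}} = \tfrac{1}{\sqrt{2}}\big(\ket{i} - \ket{j}\big).
$$
Because $M$ is a perfect matching, these $n$ vectors (two per edge, $n/2$ edges) form a complete orthonormal basis of the $n$-dimensional space, so this is a legitimate projective measurement. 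The core computation is that the inner products of $\ket{\psi_x}$ with these basis vectors depend only on $(-1)^{x_i}$ and $(-1)^{x_j}$, and hence only on the parity $x_i \oplus x_j$: when $x_i = x_j$ the state has nonzero overlap only with $\ket{e^{+}_{ij}}$, and when $x_i \neq x_j$ only with $\ket{e^{-}_{ij}}$. Thus the measurement outcome reveals some edge $(i,j) \in M$ together with the sign bit, and Bob outputs that $(i,j)$ and sets $v$ to be $0$ if the outcome was $\ket{e^{+}_{ij}}$ and $1$ if it was $\ket{e^{-}_{ij}}$.

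\medskip

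It then remains to verify that this $v$ always equals $x_i \oplus x_j$. This is immediate from the overlap computation above: the outcome corresponding to edge $(i,j)$ carries the sign $\mathrm{sign}\big((-1)^{x_i} \pm (-1)^{x_j}\big)$, so the $\pm$ label of the observed basis vector is exactly $(-1)^{x_i \oplus x_j}$, giving $v = x_i \oplus x_j$ deterministically. Since the measurement always returns \emph{some} edge of $M$ together with the correct parity bit, the winning condition $v = x_i \oplus x_j$ holds with probability $1$, establishing the theorem.

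\medskip

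I do not anticipate a genuine obstacle here, since this is the classic and clean quantum protocol; the only step requiring care is checking that the edge-indexed vectors $\{\ket{e^{+}_{ij}}, \ket{e^{-}_{ij}}\}_{(i,j)\in M}$ genuinely form an orthonormal basis (which uses that $M$ partitions $[n]$) and that the amplitude $\inp{e^{\pm}_{ij}}{\psi_x}$ vanishes in exactly the right case. One subtlety worth stating explicitly is that with these particular amplitudes the post-measurement outcome is uniformly distributed over the $n/2$ edges of $M$, but this uniformity is irrelevant to correctness: \emph{whichever} edge is returned, the parity bit attached to it is correct, so the win is guaranteed regardless of which edge Bob happens to observe.
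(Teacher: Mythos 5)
Your proposal is correct and is essentially identical to the paper's proof: the same phase-encoded state $\frac{1}{\sqrt{n}}\sum_i (-1)^{x_i}\ket{i}$, the same measurement in the basis $\{\frac{1}{\sqrt{2}}(\ket{i}\pm\ket{j}) \mid (i,j)\in M\}$, and the same observation that the overlap vanishes in exactly one of the two cases depending on $x_i\oplus x_j$. No discrepancies to report.
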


\begin{proof}
The protocol is the following:
\begin{enumerate}
\item Alice sends Bob the state \(\displaystyle \ket{\psi} = \frac{1}{\sqrt{n}} \sum_{i=1}^n (-1)^{x_i} \ket{i} \).
\item Bob measures $|\psi\rangle$ in the $n$-element basis 
\[
B = \left\{ \frac{\ket{i} \pm \ket{j}}{\sqrt{2}}  \middle| \{i,j\} \in
M\right\}\,.
\]
\begin{itemize}
\item
If the outcome of the measurement is a state $\displaystyle
\frac{1}{\sqrt{2}}(\ket{i} + \ket{j})$, Bob outputs $\{i,j\}$ and
$v=0$.
\item
If the outcome of the measurement is a state $\displaystyle
\frac{1}{\sqrt{2}}(\ket{i} - \ket{j})$, Bob outputs $\{i,j\}$ and
$v=1$.
\end{itemize}
\end{enumerate}
For each $\{i,j\} \in M$, the probability to get $({1}/{\sqrt{2}})(\ket{i} + \ket{j})$ equals~$2/n$ if $x_i\oplus x_j = 0$,
and equals~0 otherwise,
and similarly for $({1}/{\sqrt{2}}) (\ket{i} - \ket{j})$ with the parity flipped. Hence Bob's output is always correct.
\end{proof}

\subsubsection{Limits of classical protocols for \texorpdfstring{$\HM_n$}{HM}}
Here we show that classical protocols with little communication cannot have good success probability.
To start, note that a protocol that uses shared randomness is just a probability distribution over
deterministic protocols, hence the maximal winning probability is achieved by a deterministic protocol.

\begin{theorem}\label{thm:upperHM}
Every classical deterministic protocol for $\HM_n$ with $c$ bits of one-way communication, where Bob outputs $(\{i,j\},v)$, has
\[
\Pr[ v = x_i \oplus x_j ] \leq \frac{1}{2} + \frac{c+1}{\sqrt{n-1}}\,.
\]
\end{theorem}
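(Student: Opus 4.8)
The plan is to pass to a deterministic protocol (as already noted, shared randomness cannot help) and to analyze the winning probability by Fourier analysis over $\{0,1\}^n$. Writing Bob's guess bit as $v$ and his chosen edge as $(i,j)$, I would first record the identity
\[
\Pr[v=x_i\oplus x_j]=\frac12+\frac12\,\E_{x,M}\big[(-1)^v(-1)^{x_i\oplus x_j}\big],
\]
so that it suffices to bound the correlation term (in absolute value) by $O(c/\sqrt n)$. Since $(-1)^{x_i\oplus x_j}=\chi_{\{i,j\}}(x)$ is exactly the level-$2$ character indexed by the edge $\{i,j\}$, the natural move is to group the expectation over $x$ according to Alice's message $m=A(x)$. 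For each of the (at most) $2^c$ messages $m$, let $g_m=\mathbf 1[A(x)=m]$ be its indicator, with density $p_m=\E[g_m]$ and $\sum_m p_m=1$. Because the protocol is deterministic, Bob's output $(i,j),v$ is a fixed function of $(m,M)$, so the correlation collapses to $\E_M\sum_m(-1)^{v(m,M)}\,\widehat{g_m}(\{i,j\})$, where $\widehat{g_m}(\{i,j\})=\E_x[g_m(x)\chi_{\{i,j\}}(x)]$.

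Next I would bound the absolute value by $\sum_m\E_M\max_{e\in M}|\widehat{g_m}(e)|$, discarding the signs $(-1)^{v(m,M)}$ and over-counting by allowing Bob to pick the best edge of $M$. The role of the random matching now enters: a uniformly random perfect matching contains any fixed edge with probability $1/(n-1)$, so
\[
\E_M\sum_{e\in M}\widehat{g_m}(e)^2=\frac{1}{n-1}\sum_{|S|=2}\widehat{g_m}(S)^2=\frac{W_m}{n-1},
\]
where $W_m$ denotes the level-$2$ Fourier weight of $g_m$. Bounding the maximum by the $\ell_2$-norm and applying Jensen's inequality then gives $\E_M\max_{e\in M}|\widehat{g_m}(e)|\le\sqrt{W_m/(n-1)}$.

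The crucial and delicate step, which I expect to be the main obstacle, is the estimate of $W_m$. The trivial bound $W_m\le\E[g_m^2]=p_m$ only yields, via Cauchy--Schwarz over the $2^c$ messages, a bound of order $2^{c/2}/\sqrt n$, which is exponential in $c$ and far too weak. This is precisely where the KKL inequality is needed: in its level-$2$ form it states that a density-$p_m$ indicator has level-$2$ weight $W_m=O\big((p_m\ln(1/p_m))^2\big)$, which is the analytic incarnation of the claim that a short message cannot let Bob predict many edge-parities well. Hence $\sqrt{W_m/(n-1)}=O\big(p_m\ln(1/p_m)/\sqrt n\big)$, and summing over messages produces $\sum_m p_m\ln(1/p_m)$, the Shannon entropy of Alice's message, which is at most $\ln(2^c)=O(c)$. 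Assembling the pieces bounds the correlation by $O(c/\sqrt n)$ and proves the theorem. (At most one message has $p_m>1/2$, where the level-$2$ inequality does not apply; for it I would simply use $W_m\le p_m\le1$, contributing a harmless $O(1/\sqrt n)$.)
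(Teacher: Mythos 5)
Your proposal is correct and follows essentially the same route as the paper's proof: condition on Alice's message, apply the KKL/level-2 inequality to the indicator of each message set to bound its level-2 Fourier weight by $O\big((p_m\log(1/p_m))^2\big)$, exploit the fact that a uniformly random perfect matching contains any fixed edge with probability $1/(n-1)$ via a Cauchy--Schwarz/Jensen step, and conclude with the entropy bound $\sum_m p_m\log(1/p_m)\leq c$. The only cosmetic difference is that the paper works with the conditional correlations $\beta^m_{ij}$ and a distribution $q_m$ over Bob's output edges rather than with the unnormalized coefficients $\widehat{g_m}$ and a max over edges of $M$.
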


The intuition behind the proof is the following.
If the communication $c$ is small, the set~$X_m$ of inputs~$x$ for which Alice sends message $m$ will typically be large (of size about $2^{n-c}$), meaning Bob has little knowledge of most of the bits of $x$.
The KKL inequality implies that for most of the $\binom{n}{2}$ pairs $\{i,j\}$, Bob cannot guess the parity $x_i\oplus x_j$ well.  Of course, Bob has some freedom in which $\{i,j\}$ he outputs, but that freedom is limited to the $n/2$ pairs $\{i,j\}$ in his matching $M$, and it turns out that on average he will not be able to guess any of those parities well.

\begin{proof}
Fix a classical deterministic protocol.
For each $m\in\bits^c$, let $X_m \subseteq \bits^n$ be the set of Alice's inputs for which she sends message $m$.
These sets $X_m$ together partition Alice's input space $\bits^n$.
Define $p_m = |X_m|/2^n$.
Note that $\sum_m p_m = 1$, so $\{p_m\}$ is a probability distribution over the $2^c$ messages $m$.

For each $m$, define the following probability distribution over all possible pairs $\{i,j\}$: 
\[
q_m(\{i,j\}) = \Pr[\text{Bob outputs $\{i,j\}$} \mid \text{Bob received $m$}]\,.
\]
We have 
\begin{align}\label{eqqmij}
q_m(\{i,j\}) \leq \frac{1}{n-1}\,,
\end{align}
because we assume Bob always outputs an element from $M$ and for fixed $i\neq j$ we have $\Pr[\{i,j\}\in M]=1/(n-1)$,
since each $j$ is equally likely to be paired up with~$i$ under the uniform distribution on~$M$. This implies
\begin{align}\label{eqqmbound}
\sum_{\{i,j\}} q_m(\{i,j\})^2 \;\leq\; \max_{\{i,j\}} q_m(\{i,j\}) \cdot \sum_{\{i,j\}}q_m(\{i,j\}) \;=\; \max_{\{i,j\}} q_m(\{i,j\}) \leq \frac{1}{n-1}\,.
\end{align}
Define $\eps$ such that 
\[
\Pr[v = x_i \oplus x_j] = \frac{1}{2} + \eps\,,
\]
and
$\eps_m$ such that 
\[
\Pr[v = x_i \oplus x_j\mid \text{Bob received $m$}] = \frac{1}{2} +
\eps_m\,.
\]
Then $\eps = \sum_m p_m\eps_m$.
The best Bob can do when guessing $x_i\oplus x_j$ given message $m$,
is to output the value of $x_i\oplus x_j$ that occurs most often among the $x\in X_m$.
Define 
\[
\beta_{mij} = \E_{x\in X_m}[(-1)^{x_i\oplus x_j}]\,.
\]
Intuitively, if $X_m$ (and hence $p_m$) is large, then most of these $\beta_{mij}$ should be small.
We now use the KKL inequality (\expref{Theorem}{thkkl}) to make this intuition precise.

\begin{claim}\label{claim:KKLbeta}
$\displaystyle\sum_{\{i,j\}} \beta_{mij}^2 \leq
\begin{cases} 
4\log_2(1/p_m)^2 & \text{if $p_m\leq 1/2$\,,}\\
2 & \text{if $p_m>1/2$}\,.
\end{cases}$
\end{claim}

\begin{proof}
Let $f:\bits^n\rightarrow\bits$ be the characteristic function of the set $X_m$, so that $\Pr[f(x)\neq 0]=|X_m|/2^n=p_m$.
Observe that $\beta_{mij}$ is proportional to the Fourier coefficient $\widehat{f}(\{i,j\})$:
\[
\beta_{mij} \;=\; \E_{x\in X_m}[(-1)^{x_i\oplus x_j}] \;=\; \frac{2^n}{|X_m|} \E_{x\in\bits^n}[f(x)(-1)^{x_i\oplus x_j}]\;=\; \frac{1}{p_m}\widehat{f}(\{i,j\})\,.
\]
Using the KKL inequality with a $\delta\in[0,1]$ to be specified later, we get
\[
\delta^2\sum_{\{i,j\}} \widehat{f}(\{i,j\})^2
\;\leq\; \sum_{S\subseteq[n]}\delta^{|S|}\widehat{f}(S)^2\;\leq\; \Pr[f(x)\neq 0]^{2/(1+\delta)}\;=\;p_m^{2/(1+\delta)}\,,
\]
and therefore
\[
\sum_{\{i,j\}} \beta_{mij}^2 \;=\; \frac{1}{p_m^2}\sum_{\{i,j\}} \widehat{f}(\{i,j\})^2
   \;\leq\;  \frac{1}{\delta^2}(1/p_m)^{2-2/(1+\delta)}\,.
\]
For $p_m > 1/2$ simply choose $\delta=1$. For $p_m \le 1/2$, choose $\delta =1/\log_2(1/p_m)$, 
which is in $[0,1]$, so that the above is 
\[
\log_2(1/p_m)^2(1/p_m)^{2\delta/(1+\delta)}
  \;\leq\;   \log_2(1/p_m)^2(1/p_m)^{2\delta}
  \;=\;  4\log_2(1/p_m)^2\,. \qedhere
\]
\end{proof}

The fraction of $x\in X_m$ where $x_i\oplus x_j=0$ is $1/2 + \beta_{mij}/2$,
hence Bob's optimal success probability when guessing $x_i\oplus x_j$ is $1/2 + |\beta_{mij}|/2$.
This implies, for fixed $m$,
\[
\E_{\{i,j\}\sim q_m} \left[\frac{1}{2} + \frac{|\beta_{mij}|}{2}\right] \;\geq\; \Pr[v = x_i \oplus x_j] \;=\; \frac{1}{2} + \eps_m\,,
\]
where ``$\{i,j\}\sim q_m$'' means that $\{i,j\}$ is distributed according to distribution~$q_m$.
This allows us to upper bound $\eps_m$ for $m$ where $p_m\leq 1/2$:
\begin{align*}
 2\eps_m  \;& \leq\;  \E_{\{i,j\}\sim q_m}[|\beta_{mij}|]\\
 & =\;  \sum_{\{i,j\}} q_m(\{i,j\}) |\beta_{mij}|\\
 & \leq\;  \sqrt{\sum_{\{i,j\}}q_m(\{i,j\})^2} \cdot \sqrt{\sum_{\{i,j\}}\beta_{mij}^2}\\
 & \leq\;  \frac{2\log_2(1/p_m)}{\sqrt{n-1}}\,,
\end{align*}
where the second inequality is by Cauchy-Schwarz and the third uses both equation~\eqref{eqqmbound} and the first part of \expref{Claim}{claim:KKLbeta}.
Since the $p_m$ sum to~1, there can be at most one~$m$ for which $p_m>1/2$.
For that $m$ we have $\eps_m\leq 1/\sqrt{n-1}$ by an analogous argument combined with the second part of \expref{Claim}{claim:KKLbeta}.

Finally we can bound $\eps$, treating the (at most one) $m$ with $p_m>1/2$ separately:
\begin{align*}
 \eps \;& =\;  \sum_{m\in\bits^c} p_m\eps_m \\
 & \leq\;  \sum_m p_m\frac{\log_2(1/p_m)}{\sqrt{n-1}} + \frac{1}{\sqrt{n-1}}\\
 & =\;  \frac{1}{\sqrt{n-1}}\left( H(p)+1 \right)\\ 
 & \leq\;  \frac{c+1}{\sqrt{n-1}}\,,
\end{align*}
where $H(p)=\sum_m p_m\log_2(1/p_m)$ denotes the binary entropy function, 
which is at most $c$ since the distribution $\{p_m\}$ is on $2^c$ elements.
\end{proof}

\subsection{Classical protocol for \texorpdfstring{$\HM_n$}{HM}}

Here we design a classical protocol that achieves the above upper bound on the success probability.
This protocol has no bearing on the large Bell inequality violations that are
our main goal in this paper, but it is nice to know the previous upper bound on the maximal success probability is essentially tight.

\begin{theorem}\label{thm:lowerHM}
For every $n$ that is a power of~2, and every positive integer $c\leq \sqrt{n}$, there exists a classical protocol
for $\HM_n$ with $c$ bits of one-way communication, such that for all inputs $x,M$,
$$
\Pr [ v = x_i \oplus x_j ] = \frac{1}{2} + \Omega\left(\frac{c}{\sqrt{n}}\right)\,.
$$
\end{theorem}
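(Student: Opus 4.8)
The plan is to exhibit an explicit one-way protocol and to reduce its analysis to a clean ``hit probability times conditional bias'' estimate, reusing the exact quantities from the proof of Theorem~\ref{thm:upperHM}. First I would record the general shape of any such protocol: after fixing Alice's message map $x\mapsto m$, for each message $m$ and each pair $(i,j)$ let $\beta^m_{ij}=\E_{x\in X_m}[(-1)^{x_i\oplus x_j}]$ be the conditional correlation. Bob's optimal play on message $m$ and matching $M$ is to output the edge of $M$ maximizing $|\beta^m_{ij}|$ together with the sign of that correlation, so the overall winning probability is exactly $\tfrac12+\tfrac12\,\E_{m,M}\big[\max_{(i,j)\in M}|\beta^m_{ij}|\big]$. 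Hence the whole task reduces to designing a $c$-bit message so that a \emph{uniformly random} perfect matching contains, with good probability, an edge whose parity is biased by $\Omega(c/\sqrt n)$. To get a bound that holds for \emph{every} fixed input $x,M$ (as the statement demands), I would let Alice and Bob first draw the relevant combinatorial structure from shared randomness, so that the claimed probability is over this shared randomness and the symmetry kills the dependence on the particular $(x,M)$.

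For the construction I would consider a structured subset of coordinates: on the one extreme, Alice reveals $x$ restricted to a set $T$ (a ``subcube'' message), giving $|\beta^m_{ij}|=1$ for every edge inside $T$; on the other extreme she sends coarse global statistics such as a Hamming-weight threshold or the majorities of several sub-blocks of a fixed matching's parities, giving a small but nonzero bias on \emph{many} edges simultaneously. The per-edge bias of the threshold/majority message I would compute through its level-$1$ Fourier weight, which is of order $1/\sqrt n$ per elementary variable and grows with how many bits are spent. The hit probability---that a random $M\in\M_n$ contains at least one ``useful'' edge---I would control using $\Pr_M[(i,j)\in M]=1/(n-1)$ together with a first/second-moment (or inclusion--exclusion) bound on the number of useful edges of $M$, showing it concentrates around its mean.

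The hard part is the scaling: these two quantities trade off against each other. Concentrating all $c$ bits on few coordinates maximizes the per-edge bias (up to $1$) but makes it unlikely that the random matching ever uses such an edge, whereas spreading the information keeps the hit probability constant but drives the per-edge bias down. To reach the target $\Omega(c/\sqrt n)$---matching the upper bound of Theorem~\ref{thm:upperHM}---the construction must be tuned to the right intermediate scale, which is precisely the regime in which the Cauchy--Schwarz step $(*)$ and the KKL step \eqref{eq:KKL} of that upper-bound proof become simultaneously tight (i.e.\ $q_m$ roughly uniform over $n-1$ edges all carrying bias $\sim c/\sqrt n$). Establishing this tightness---namely a quantitative lower bound on $\E_M[\max_{(i,j)\in M}|\beta^m_{ij}|]$ for the chosen message, via concentration of the number of useful edges in a random perfect matching---is the main technical obstacle, and the place where the choice of structure (and the constraint $c\le\sqrt n$) really enters.
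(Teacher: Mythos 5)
There is a genuine gap. Your reduction to ``hit probability times conditional bias'' and your observation that the crux is a quantitative lower bound on $\E_M[\max_{(i,j)\in M}|\beta^m_{ij}|]$ are both correct, but you explicitly leave that crux unresolved, and the candidate constructions you float do not reach $\Omega(c/\sqrt n)$. Revealing $x$ on a set $T$ of $c$ coordinates gives bias $1$ only on the $\binom{c}{2}$ edges inside $T$; a random matching hits one of these with probability only $\Theta(c^2/n)$, so the advantage is $\Theta(c^2/n)=\Theta((c/\sqrt n)^2)$, quadratically short of the target. A global threshold/majority message controls level-$1$ Fourier information about individual bits, but $\beta^m_{ij}$ is a degree-$2$ quantity (it is governed by the weight of $\widehat{1_{X_m}}$ on pairs), so a single coarse statistic does not by itself give the needed per-edge parity bias either.

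The missing idea in the paper's proof is how to convert level-$1$ (single-bit) information into level-$2$ (parity) bias multiplicatively. Alice and Bob use shared randomness to fix two disjoint blocks $S_1,S_2\subseteq[n]$ of size $\sqrt n$ each, and a shared codebook of $2^{c/2}$ random $\sqrt n$-bit strings per block. With $c/2$ bits Alice names a codeword within relative distance $1/2-\Omega(\sqrt{c}/n^{1/4})$ of $x$ restricted to $S_1$ (such a codeword exists with high probability because a random string lands within distance $\sqrt n/2-\beta n^{1/4}$ with probability about $2^{-\delta\beta^2}$, and one takes $\beta=\Theta(\sqrt c)$); she does the same for $S_2$ with the other $c/2$ bits. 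A uniformly random perfect matching contains an edge crossing between $S_1$ and $S_2$ with probability $\Omega(1)$ (there are $|S_1|\cdot|S_2|=n$ candidate pairs, each in $M$ with probability $1/(n-1)$). For such an edge Bob predicts $x_i$ and $x_j$ from the two codewords with \emph{independent} biases $\Omega(\sqrt{c}/n^{1/4})$ each, and the bias of his guess for $x_i\oplus x_j$ is the \emph{product} of the two, namely $\Omega(c/\sqrt n)$. This product trick is precisely what resolves the bias-versus-hit-probability trade-off you identified as the main obstacle: the hit probability stays constant while the parity bias reaches the right scale. Without it (or some substitute), your proposal does not yield the theorem.
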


\begin{proof}
Assume for simplicity that $\sqrt{n}$ is integer, and that $c$ is even and sufficiently large.
Alice and Bob use shared randomness to choose two disjoint subsets $S_1,S_2$ of $[n]$ of size $\sqrt{n}$ each.
Let $y$ denote the bits of $x$ located in the indices given by the first subset, and $z$ the
bits located in the indices given by the second subset.
Alice and Bob use shared randomness to produce $2^{c/2}$ random $\sqrt{n}$-bit strings $y^{(1)},\ldots,y^{(2^{c/2})}$.
For each $\ell$, the distance $d(y,y^{(\ell)})$ is distributed binomially, as the sum of $\sqrt{n}$ fair coin flips.

The following well-known fact about the tail of binomial distribution can be seen for instance by estimating 
$\binom{k}{k/2-\beta\sqrt{k}}$ using Stirling's approximation.

\begin{fact}\label{factbinomial}
There exists a universal constant $\gamma>0$ such that if $X$ is the sum of $k$ fair coin flips, 
then for all $0<\beta<\sqrt{k}/2$ we have 
\[
\Pr\bigl[X\leq k/2-\beta\sqrt{k}\bigr]\geq 2^{-\gamma(1+\beta^2)}\,.
\]
\end{fact}

Thus we have 
\[
\Pr\bigl[d(y,y^{(\ell)})\leq \sqrt{n}/2-\beta n^{1/4}\bigr]\geq 2^{-\gamma
  (1+\beta^2)}\,.
\]
Hence by choosing $\beta=\Theta(\sqrt{c})$, with probability close to~1, there will be an $\ell$ such that $y$ and $y^{(\ell)}$ are at relative distance
$\leq 1/2-\Omega(c^{1/2}/n^{1/4})$.
If so, Alice sends Bob the first such $\ell$, and otherwise she tells him there is no such $\ell$.
This costs $c/2$ bits of communication.
Similarly, at the expense of another $c/2$ bits of communication, Bob obtains an approximation of $z$
with relative distance at most $\leq 1/2-\Omega(c^{1/2}/n^{1/4})$.

It is easy to see that with probability at least 1/2, Bob's matching $M$ contains
an $\{i,j\}$ with $i\in S_1$ and $j\in S_2$.
Bob can predict $x_i$ with success probability $1/2+\Omega(c^{1/2}/n^{1/4})$ from his approximation of $y$,
and can predict $x_j$ with success probability $1/2+\Omega(c^{1/2}/n^{1/4})$ from his approximation of $z$.
These success probabilities are independent, hence he can predict $x_i\oplus x_j$ with success probability
$1/2+\Omega(c/\sqrt{n})$.
If there is no such $\{i,j\}\in M$, or if he did not get good approximations to~$y$ or~$z$,
then Bob just outputs any $\{i,j\}\in M$ and a random bit for $v$, giving success probability $1/2$.
Putting everything together, we have a protocol that wins with probability $1/2+\Omega(c/\sqrt{n})$.
\end{proof}

\subsection{Entangled value for \texorpdfstring{$\HM^{\textsc{nl}}_n$}{HMnl}}

We now port our results to the non-local setting, referring to \expref{Definition}{defhmnl} for 
the game associated with the Hidden Matching problem.

\begin{theorem} \label{thm:quantumHMnl}
For every $n$ that is a power of~2, there exists an entangled strategy for $\HM^{\textsc{nl}}_n$ using a maximally entangled state with local dimension $n$, such that condition~\eqref{eq:HMnlcondition} is always satisfied.
\end{theorem}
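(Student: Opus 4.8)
The plan is to build on the communication protocol of Theorem~\ref{thm:quantumHM}, converting the idea of ``Alice sends the state $\ket{\psi}=\frac{1}{\sqrt{n}}\sum_i(-1)^{x_i}\ket{i}$'' into a non-communicating protocol by using a maximally entangled state in place of the transmitted message. First I would fix the shared state to be the $n$-dimensional maximally entangled state $\ket{\Phi}=\frac{1}{\sqrt{n}}\sum_{i=1}^n\ket{i}\ket{i}$. The key observation is that when Alice applies a diagonal phase unitary $U_x=\sum_i(-1)^{x_i}\ketbra{i}{i}$ to her half, the joint state becomes $\frac{1}{\sqrt{n}}\sum_i(-1)^{x_i}\ket{i}\ket{i}$, so Bob's reduced register now encodes exactly the ``message state'' that Alice would have sent in the communication protocol, up to the correlation with Alice's register.

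Next I would have Bob measure his half in the same basis $B=\{\frac{1}{\sqrt{2}}(\ket{i}\pm\ket{j})\mid(i,j)\in M\}$ used in Theorem~\ref{thm:quantumHM}. By the analysis there, Bob's outcome always reveals a pair $(i,j)\in M$ together with the sign, which corresponds to the parity $x_i\oplus x_j$; call this sign bit $d'$, so that $d'=x_i\oplus x_j$. The remaining task is to produce Alice's output $a\in\01^{\log n}$ and Bob's output $d$ so that the winning condition $(a\cdot(i\oplus j))\oplus d=x_i\oplus x_j$ holds. The natural move is for Bob to set $d=d'$ (so that $d=x_i\oplus x_j$) and for Alice to output $a=0$, which trivially makes $a\cdot(i\oplus j)=0$ and satisfies the condition. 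To make the protocol genuinely use both registers symmetrically (and to match the intended structure where $a$ carries information about $x$), I would instead have Alice measure her half in the computational basis to obtain an index, then use the Hadamard/Fourier structure of $\log n$ qubits to output an appropriate $a$; but the cleanest route is the one above, where the non-local correlation does all the work and the condition is met identically.

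The step I expect to require the most care is verifying that measuring the two halves of the entangled state in the stated bases produces \emph{perfectly correlated} outcomes carrying the parity information, rather than merely correlated-on-average outcomes. Concretely, I would compute the post-measurement amplitude: expanding $\ket{i}=\frac{1}{\sqrt2}\big[\frac{1}{\sqrt2}(\ket{i}+\ket{j})+\frac{1}{\sqrt2}(\ket{i}-\ket{j})\big]$ in Bob's basis and tracing through the correlations with Alice's register, one checks that the sign outcome on Bob's side is deterministically tied to $x_i\oplus x_j$ exactly as in the proof of Theorem~\ref{thm:quantumHM}. Once this perfect correlation is established, setting Alice's output to make $a\cdot(i\oplus j)$ vanish (or absorbing it into $d$) guarantees condition~\eqref{eq:HMnlcondition} holds with certainty, giving $\omega^*_n(\HM_{nl})=1$. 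The main subtlety is bookkeeping the phase $(-1)^{x_i}$ correctly through the basis change so that the $\pm$ sign genuinely equals $(-1)^{x_i\oplus x_j}$.
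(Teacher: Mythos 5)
There is a fundamental gap in your proposal: the claim that Bob, by measuring \emph{his half alone} in the basis $B=\{\frac{1}{\sqrt{2}}(\ket{i}\pm\ket{j})\mid(i,j)\in M\}$, obtains a sign bit $d'$ equal to $x_i\oplus x_j$ is false. After Alice applies the phase unitary $U_x$, the joint state is $\frac{1}{\sqrt{n}}\sum_i(-1)^{x_i}\ket{i}\ket{i}$, but Bob's \emph{reduced} state is $\frac{1}{n}\sum_i\ketbra{i}{i}=I/n$, the maximally mixed state, completely independent of $x$. Local phases applied by Alice are invisible to any local measurement by Bob; this is exactly what distinguishes the non-local setting from the communication setting of Theorem~\ref{thm:quantumHM}, where Bob physically holds the pure state $\frac{1}{\sqrt{n}}\sum_i(-1)^{x_i}\ket{i}$. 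In your protocol Bob's $\pm$ outcome is a uniformly random bit uncorrelated with $x_i\oplus x_j$, so with $a=0^{\log n}$ and $d=d'$ the winning probability is $1/2$, not $1$. Your fallback (Alice measures in the computational basis) fails for the same reason: the state collapses to $\ket{i}\ket{i}$ up to a global phase, and Bob again learns nothing about the parity.

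The idea you set aside --- using the Hadamard/Fourier structure on \emph{both} sides --- is in fact the essential one, and it is what the paper does. After Alice's phase flip and Bob's projection onto $\mathrm{span}\{\ket{i},\ket{j}\}$ (which identifies the pair $(i,j)$ but not the parity), \emph{both} players apply $H^{\otimes\log n}$ and measure, obtaining $a$ and $b$ respectively. A direct amplitude computation shows every outcome pair with nonzero amplitude satisfies $(a\cdot(i\oplus j))\oplus(b\cdot(i\oplus j))=x_i\oplus x_j$, so Bob outputs $d=b\cdot(i\oplus j)$ and the condition holds with certainty. The parity is not recoverable by either party alone; it is distributed across the \emph{correlation} between the two outputs, which is precisely what the XOR form of the winning condition~\eqref{eq:HMnlcondition} is designed to exploit. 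To repair your proof you would need to replace the ``Bob alone extracts the sign'' step with this two-sided Hadamard measurement and carry out the amplitude bookkeeping you correctly identified as the delicate step.
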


\begin{proof}
The strategy is as follows.
Alice and Bob share
\(\displaystyle \ket{\psi}= \frac{1}{\sqrt{n}} \sum_{i\in \bits^{\log n}}\ket{i}\ket{i}\).
\begin{enumerate}
\item Alice performs a phase-flip according to her input $x$. The state becomes
\[ 
\frac{1}{\sqrt{n}} \sum_{i\in \bits^{\log n}} (-1)^{x_i}\ket{i} \ket{i}\,.
\]
\item Bob performs a projective measurement with projectors $P_{ij} = \ketbra{i}{i} + \ketbra{j}{j}$, with $\{i,j\} \in M$. The state collapses to
 \[ \frac{1}{\sqrt{2}} \bigl[(-1)^{x_i}\ket{i} \ket{i} + (-1)^{x_j}\ket{j} \ket{j}\bigr]\]
for some $\{i,j\} \in M $ known to Bob.
\item Both players apply Hadamard transforms $H^{\otimes \log{n}}$, and the state becomes
\[
\frac{1}{\sqrt{2}n}\sum_{a,b \in \bits^{\log n}}\left( (-1)^{x_i+a\cdot i+b\cdot i} + (-1)^{x_j+a\cdot j+b\cdot j}\right)\ket{a}\ket{b}\,.
\]
\end{enumerate}
Notice that in the latter state, any pair $a,b$ with nonzero amplitude must satisfy that
$$ (a \cdot(i \oplus j)) \oplus (b \cdot(i \oplus j)) = x_i \oplus x_j\,. $$
Hence, if the players measure the state, Alice outputs $a$, and Bob outputs $\{i,j\}$ and the bit $d=b \cdot(i \oplus j)$, 
then they win the game with certainty.
\end{proof}

\subsection{Classical value for \texorpdfstring{$\HM^{\textsc{nl}}_n$}{HMnl}}

In contrast to the entangled value, the optimal classical winning probability is not much better than 1/2:

\begin{theorem} \label{thm:upperHMnl}
The winning probability of any classical strategy for $\HM^{\textsc{nl}}_n$ differs from ${1}/{2}$ by at most
$O\left((\log n)/{\sqrt{n}}\right)$.
\end{theorem}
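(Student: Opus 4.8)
The plan is to exploit exactly the reduction from the non-local game to the one-way communication version of Hidden Matching that was already sketched in the introduction, and then to invoke the classical upper bound of Theorem~\ref{thm:upperHM} with communication parameter $c=\log n$. Since Alice's output $a$ in $\HM_{nl}$ lives in $\01^{\log n}$, it carries precisely $\log n$ bits of information, which is what makes this reduction quantitatively tight.

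First I would reduce to deterministic strategies. As observed just before Theorem~\ref{thm:upperHM}, a strategy using shared randomness is merely a convex combination of deterministic ones, so the supremum of the winning probability is attained by a deterministic strategy; it therefore suffices to bound the deviation from $\frac{1}{2}$ for deterministic classical strategies. Next I would convert such a strategy into a one-way protocol for $\HM$: in the non-local game Alice's output $a\in\01^{\log n}$ depends only on $x$, so I let Alice simply send this $a$ to Bob, costing $\log n$ bits. Bob, who knows $M$, computes the pair $(i,j)\in M$ and the bit $d$ dictated by his part of the non-local strategy, and then outputs $(i,j)$ together with $v=(a\cdot(i\oplus j))\oplus d$. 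The $\HM$ winning condition $v=x_i\oplus x_j$ is then literally the non-local condition~\eqref{eq:HMnlcondition}, so this communication protocol wins with exactly the same probability $p$ as the non-local strategy. Applying Theorem~\ref{thm:upperHM} with $c=\log n$ gives $p\le\frac{1}{2}+O\!\left(\log n/\sqrt{n}\right)$.

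The remaining point is the two-sided deviation: the statement bounds $|p-\frac{1}{2}|$, so I also need to rule out $p$ being far \emph{below} $\frac{1}{2}$. Here I would use a symmetry argument. Given any strategy winning with probability $p$, consider the strategy obtained by replacing Bob's output bit $d$ with $d\oplus 1$. This is again a legal classical strategy, and it wins precisely when the original loses, hence with probability $1-p$. The bound just derived applies to it as well, yielding $1-p\le\frac{1}{2}+O\!\left(\log n/\sqrt{n}\right)$, i.e.\ $p\ge\frac{1}{2}-O\!\left(\log n/\sqrt{n}\right)$. Combining the two inequalities gives $|p-\frac{1}{2}|\le O\!\left(\log n/\sqrt{n}\right)$, as claimed.

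Essentially all the real content sits inside Theorem~\ref{thm:upperHM} (and ultimately the KKL inequality), so this proof is expected to be short. There is no serious obstacle; the only two places needing care are verifying that Alice's message is \emph{exactly} $\log n$ bits—so that the substitution $c=\log n$ is legitimate and no extra factor creeps in—and handling the downward deviation, for which the bit-flip trick is the clean fix. I would present the reduction first as the conceptual core and then dispatch the symmetry step in one sentence.
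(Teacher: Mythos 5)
Your proposal is correct and follows essentially the same route as the paper: reduce a classical strategy for $\HM_{nl}$ to a one-way protocol for $\HM$ with $c=\log n$ bits by having Alice send her output $a$ and Bob report $v=(a\cdot(i\oplus j))\oplus d$, then invoke Theorem~\ref{thm:upperHM}. Your explicit bit-flip argument for the downward deviation is exactly what the paper leaves implicit with ``the lower bound follows similarly.''
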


\begin{proof}
A strategy that wins $\HM^{\textsc{nl}}_n$ with success probability $1/2+\eps$ can be turned into 
a protocol for $\HM_n$ with $\log{n}$ bits of communication and the same winning probability:
the players play $\HM^{\textsc{nl}}_n$, with Alice producing $a$ and Bob producing $\{i,j\},d$;
Alice then sends $a$ to Bob, who outputs $\{i,j\},(a \cdot(i\oplus j)) \oplus d$.
The latter bit equals $x_i\oplus x_j$ with probability $1/2+\eps$.
This requires $c=\log{n}$ bits of communication, so \expref{Theorem}{thm:upperHM} implies the upper
bound $1/2+O\left((\log n)/{\sqrt{n}}\right)$ on the winning probability. 
The lower bound of $1/2-O\left((\log n)/{\sqrt{n}}\right)$ on the winning probability follows similarly.
\end{proof}

Next we show that our upper bound on the success probability of classical strategies for $\HM^{\textsc{nl}}_n$ is nearly optimal: we can achieve advantage at least $\Omega(\sqrt{(\log n)/n})$. In %
the \exprefsilent{Appendix}{sec:alternativeproofhmnl}  %
we also give a simple alternative strategy with a slightly weaker advantage of $\Omega(1/\sqrt{n})$. The correctness of that more elementary strategy can be proven from first principles, unlike the one presented here. \expref{Theorem}{thm:lowerHMnl} and 
the \exprefsilent{Appendix}{sec:alternativeproofhmnl} 
have no bearing on our separation, but are included here mostly for 
completeness.

\begin{theorem}\label{thm:lowerHMnl}
For every $n$ that is a power of~2, there exists a classical deterministic strategy for $\HM^{\textsc{nl}}_n$ with winning probability
${1}/{2} + \Omega\big(\sqrt{(\log n)/{n}}\big)$ (under the uniform input distribution).
\end{theorem}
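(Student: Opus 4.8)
The plan is to first boil the winning probability down to a single scalar, and then engineer Alice's output to make that scalar large. Shared randomness never helps on average, so I fix a deterministic Alice strategy $a(x)\in\01^{\log n}$ and a deterministic Bob. For a fixed pair $(i,j)$ with $k=i\oplus j$, set $c(i,j)=\E_x[(-1)^{a(x)\cdot k}(-1)^{x_i\oplus x_j}]$, which is exactly the degree-$2$ Fourier coefficient $\widehat{g_k}(\{i,j\})$ of $g_k(x)=(-1)^{a(x)\cdot k}$. Given the pair, Bob's best bit is $d=\tfrac{1}{2}(1-\mathrm{sign}\,c(i,j))$, yielding conditional bias $\tfrac12|c(i,j)|$; and since $c$ is input-independent, Bob can precompute it and, on input $M$, output the pair of $M$ maximizing $|c(i,j)|$. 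Hence the optimal bias of any protocol built on $a(\cdot)$ is $\tfrac12\,\E_{M}\big[\max_{(i,j)\in M}|c(i,j)|\big]$, and the whole problem reduces to designing $a(\cdot)$ so that a random matching is likely to contain a pair with $|c(i,j)|$ of order $\sqrt{\log n/n}$.

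The construction I would use is a \emph{majority with tuned fan-in}. Identify $[n]$ with $\01^{\log n}$ and let $e_\ell$ be the coordinate directions. For a coordinate direction $k=e_\ell$ one has $c(i,j)=\widehat{g_{e_\ell}}(\{i,j\})=\E_x[(-1)^{a_\ell(x)}(-1)^{x_i\oplus x_j}]$, so it suffices to control the single bit $a_\ell$. For each $\ell\in[\log n]$ fix a set $A_\ell$ of $N=\Theta(n/\log n)$ pairwise-disjoint pairs $(i,i\oplus e_\ell)$ taken from the coordinate matching $M_{e_\ell}$, and define $a_\ell(x)=\Maj$ of the $N$ parities $\{x_i\oplus x_{i\oplus e_\ell}:(i,i\oplus e_\ell)\in A_\ell\}$. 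Because the pairs in $A_\ell$ are disjoint, these $N$ parities are independent unbiased bits, so the correlation of their majority with any one of them is the level-$1$ Fourier weight of $\Maj_N$, namely $\binom{N-1}{(N-1)/2}2^{-(N-1)}=\Theta(1/\sqrt N)=\Theta(\sqrt{\log n/n})$. Thus every one of the $n$ distinct ``special'' pairs in $\bigcup_\ell A_\ell$ has $|c(i,j)|=\Theta(\sqrt{\log n/n})$. Bob's strategy is simply: if $M$ meets $\bigcup_\ell A_\ell$, output such a pair with $d=0$ (the majority is positively correlated with each input); otherwise guess a random bit.

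It then remains to show that a uniform random perfect matching $M\in\M_n$ contains one of these $n$ special pairs with probability $\Omega(1)$, which I would do by a second-moment argument. Let $X$ count special pairs in $M$. Then $\E[X]=n\cdot\tfrac{1}{n-1}\approx 1$, and $\E[X^2]=O(1)$: two special pairs sharing a vertex never co-occur in a matching, while two vertex-disjoint pairs lie in $M$ with probability $\tfrac{1}{(n-1)(n-3)}$ and there are only $O(n^2)$ of them, so the off-diagonal contribution is $O(1)$. Paley--Zygmund then gives $\Pr[X\ge 1]\ge \E[X]^2/\E[X^2]=\Omega(1)$. Conditioned on hitting a special pair $(i^*,j^*)\in A_{\ell^*}$, Bob wins iff $\Maj$ equals the one parity $x_{i^*}\oplus x_{j^*}$ among its inputs, which happens with probability $\tfrac12+\Theta(\sqrt{\log n/n})$; otherwise he wins with probability exactly $\tfrac12$. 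Multiplying, the overall bias is $\Omega(1)\cdot\Theta(\sqrt{\log n/n})=\Omega(\sqrt{\log n/n})$.

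The one genuine design decision—and the place where a naive choice loses the $\sqrt{\log n}$—is the fan-in $N$. Shrinking $N$ sharpens the majority's correlation like $1/\sqrt N$ but shrinks the special set to $N\log n$ pairs and hence the covering probability to $\Theta(N\log n/n)$ (until it saturates at $\Omega(1)$), so the bias behaves like $\tfrac{\log n}{n}\sqrt N$, which is maximized precisely at $N=\Theta(n/\log n)$. I expect the main obstacle is not any single calculation—the $\Theta(1/\sqrt N)$ majority coefficient is routine via Stirling and the covering bound is a clean second moment—but rather recognizing this correlation-versus-coverage trade-off and checking that the special pairs stay spread enough (the vanishing of vertex-sharing terms is what keeps $\E[X^2]=O(1)$) for the covering step to go through.
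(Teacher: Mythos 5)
Your proof is correct, but it takes a genuinely different route from the paper's. The paper keeps Bob trivial---he always outputs the pair of $M$ containing the fixed vertex $1$ with $d=0$---and makes Alice adaptive: on input $x$ she outputs the $a$ maximizing the number of $j$ with $a\cdot j=x_1\oplus x_j$, and a second-moment (Chebyshev) argument over the $n$ indicator variables $Z_a$, for uniformly random $x$, shows that with probability $1-O(1/\sqrt{n})$ some $a$ beats $1/2$ by $\Omega(\sqrt{\log n/n})$ on the $n-1$ pairs $(1,j)$. You do the opposite: Alice's map is a fixed explicit function (each output bit a majority of $\Theta(n/\log n)$ disjoint parities), every one of the $\Theta(n)$ designated pairs carries correlation $\Theta(\sqrt{\log n/n})$ via the level-one Fourier coefficient of majority, and the second moment is applied to the random matching to show it hits a designated pair with constant probability. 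The individual steps check out: disjointness of the pairs in each $A_\ell$ makes the majority's inputs independent uniform bits, the vanishing of the vertex-sharing terms is indeed what keeps $\E[X^2]=O(1)$, and the independence of $x$ and $M$ justifies multiplying the coverage probability by the conditional bias. The only cosmetic point is that your fallback Bob flips a coin; to literally match the word ``deterministic'' in the statement, either fix $d$ to the sign-optimal value for whatever pair he outputs (which your first paragraph already sets up, and which guarantees at least $1/2$ in that branch) or invoke the standard averaging argument over Bob's coin. As for what each approach buys: yours gives a fully explicit Alice strategy and a clean reduction of the bias to degree-two Fourier coefficients of $(-1)^{a(x)\cdot k}$, at the cost of the correlation-versus-coverage tuning you describe; the paper's is shorter and needs no function design, at the price of Alice's output being defined only by a maximization over $a$.
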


\begin{proof}
The strategy is as follows.
Bob finds the $j\in\{2,\ldots,n\}$ that is matched to~1 in $M$, and
outputs $\{1,j\}$ and $d=0$. Since the number~$i=1$ corresponds to the
string $0^{\log n}$, the winning condition $(a \cdot(i \oplus j))
\oplus d = x_i \oplus x_j$ is now equivalent to $a\cdot j=x_1\oplus
x_j$. Alice, given her input $x\in\bits^n$, outputs the
value~$a\in\bits^{\log n}$ that maximizes the winning probability
subject to $j$ being uniformly distributed over $\{2,\ldots,n\}$. That
is, she selects an~$a$ that maximizes the number 
\[
J_{ax}:=|\{j\in\{2,\ldots,n\} : a\cdot j=x_1\oplus x_j\}|\,.
\]
The winning probability of this strategy, for fixed $x$ and uniformly random~$M$, is $\max_a J_{ax}/(n-1)$. In the remainder of this proof we show that
\begin{equation}\label{eq:Jaxlowerbound}
\E_x[\max_a J_{ax}]\geq n/2+\Omega(\sqrt{n\log n})\,,
\end{equation}
which implies the theorem.

We use the following result due to Talagrand~\cite[Proposition 4.13]{ledoux&talagrand}.
For a finite set $T\subseteq\mathbb{R}^n$, define
\[
r(T):=\E_{z\in\{\pm 1\}^n}\left[ \sup_{t\in T}\left|\sum_{i=1}^n z_it_i\right| \right]\,.
\]
This is the expected maximal overlap between $z$ and the elements of $T$, expectation taken over uniformly random~$z\in\{\pm 1\}^n$. Let $N(T,d_2;\eps)$ denote the minimal number of (open) balls of radius $\eps>0$ (in Euclidean distance $d_2$) needed to cover~$T$. 
\begin{proposition}[Talagrand]\label{prop:ledouxtalagrand}
There exists a constant $K>0$ such that for any $\eps>0$ and $T \subseteq \mathbb{R}^n$, if $\max_{t\in T,i\in[n]}|t_i|\leq \eps^2/(Kr(T))$, then
\begin{equation*}
\eps\sqrt{\log N(T,d_2;\eps)}\leq K r(T)\,.
\end{equation*}
\end{proposition}
We apply this proposition as follows. For $a\in\bits^{\log n}$ consider the Hadamard codeword $h(a):=((-1)^{a\cdot j})_{j\in\bits^{\log n}}$, with bits represented as $\pm 1$ instead of $0/1$. Let $T=\{h(a) : a\in\bits^{\log n}\}$ be the set of $n$ Hadamard codewords. Any two distinct elements of $T$ differ in exactly $n/2$ positions, and hence are at Euclidean distance $\sqrt{2n}$. Therefore a ball of radius $\eps:=\sqrt{n}/2$ can contain at most one element of $T$, so we need exactly $n$ balls of radius~$\eps$ to cover~$T$ (\ie, $N(T,d_2;\eps)=n$). \expref{Proposition}{prop:ledouxtalagrand} now implies
\[
r(T)=\Omega(\sqrt{n\log n})\,.
\]
The definition of $r(T)$ takes the absolute value of $\sum_{i=1}^n z_it_i$, but by symmetry, with probability~1/2 this quantity is positive for the maximizing~$t$, and moreover the sum can never be smaller than $-\sqrt{n}$ since $T$ forms an orthogonal basis.  Hence we also have
\[
\E_{z\in\{\pm 1\}^n}\left[ \sup_{t\in T}\sum_{i=1}^n z_it_i \right] = \Omega(\sqrt{n\log n})\,.
\]
This means that we expect $z$ to be relatively close in Hamming distance to some Hadamard codeword: the expected number of positions where $z$ agrees with the closest $t\in T$ is $n/2+\Omega(\sqrt{n\log n})$. In our application, since $x$ itself is uniformly random, the string $y:=((-1)^{x_1\oplus x_j})_{j\in\bits^{\log n}}$ is uniformly random except for its first bit, which is fixed to~1. Our quantity $\max_a J_{ax}$ measures the number of positions where $y$ agrees with the closest $t\in T$, ignoring the first position. We conclude that
\[
\E_{x\in\bits^n}\left[ \max_{a\in\bits^{\log n}} J_{ax} \right] \geq n/2 + \Omega(\sqrt{n\log n}) - 1\,.
\]
This implies equation~\eqref{eq:Jaxlowerbound}.
\end{proof}

\section{The Khot-Vishnoi game}\label{seckhotvishnoi}

\subsection{The classical value}\label{seccupper}

In this section we analyze the classical value of the Khot-Vishnoi game (\expref{Definition}{defkvgame}).
Our main result is an upper bound on the classical value of $1/n^{\eta/(1-\eta)}$, 
based on the analysis from~\cite{khot&vishnoi:cutproblems}.

Before we give that upper bound, let us first argue that it is essentially tight, 
\ie, there exists a strategy whose winning probability is approximately $1/n^{\eta/(1-\eta)}$.
To get some intuition for this game, first think of $\eta$ as some small constant
(even though we will eventually choose it close to $1/2$), and consider the following natural
classical strategy:
\begin{quote}
Alice and Bob each output the element of their coset that has highest Hamming weight.
\end{quote}
The idea is that if $a$ is the element of highest Hamming weight in
Alice's coset $x$, we expect $a\oplus z$ to also be of high Hamming weight (because it is
close to $a$ in Hamming distance), and since $a\oplus z$ is in his coset $y$, 
Bob is somewhat likely to pick it as his output.
We now give a brief back-of-the-envelope calculation suggesting that the winning
probability of this strategy is approximately $1/n^{\eta/(1-\eta)}$; since it is not
required for our main result, we will not attempt to make this argument rigorous.

Let $t \ge 0$ be such that the probability that a binomial $B(n,1/2)$ variable is greater
than $(n+t)/2$ is $1/n$. Recalling that the cumulative distribution function of the binomial distribution $B(n,p)$ can be approximated
by that of the normal distribution $N(np, np(1-p))$, and that the probability that a normal variable
is greater than its mean by $s$ standard deviations is approximately $e^{-s^2/2}$,
we can essentially choose $t$ to be the solution to $e^{-t^2/(2n)} = 1/n$
(so $t=\sqrt{2n \ln n}$). Then we expect Alice's $n$-element coset to contain exactly one element 
of Hamming weight greater than $(n+t)/2$. Since the element $a$ that Alice picks is the one
of highest Hamming weight, we assume for simplicity that its Hamming weight
is $(n+t)/2$. The players win the game if and only if $a\oplus z$ has the highest weight
among Bob's $n$ elements, which we heuristically approximate by the event
that $a\oplus z$ has Hamming weight at least $(n+t)/2$. The Hamming weight of $a\oplus z$
is distributed as the sum of two independent binomial distributions $B((n+t)/2,1-\eta)$ and $B((n-t)/2,\eta)$, which
can be approximated as above by the normal distribution $N((n+t)/2 - \eta t, n\eta(1-\eta))$.
Hence, for the Hamming weight of $a\oplus z$ to be at least $(n+t)/2$, the normal
variable needs to be greater than its mean by $\eta t / \sqrt{n \eta(1-\eta)}$
standard deviations, and the probability of this happening is approximately
\[
e^{-\eta^2 t^2/(2 n \eta(1-\eta))} = \frac{1}{n^{\eta/(1-\eta)}}\,,
\]
as claimed.

Now we show that no classical strategy can be substantially better.
The main technical tool used in the proof is the hypercontractive inequality (\expref{Theorem}{thbonamibeckner}),
which is applicable to our setting because we choose $u$ uniformly and $u\oplus z$ 
may be viewed as a ``noisy version'' of $u$.

\begin{theorem}
For every $n$ that is a power of~2, and every $\eta\in[0,1/2]$, every classical strategy
for the Khot-Vishnoi game KV$_n$ (\expref{Definition}{defkvgame})
has winning probability at most $1/n^{\eta/(1-\eta)}$.
\end{theorem}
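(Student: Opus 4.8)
The plan is to reduce the winning probability to a noise-operator inner product and then apply the Bonami--Beckner (hypercontractive) inequality exactly once. First I would reduce to deterministic strategies: since a strategy using shared randomness is a distribution over deterministic ones, the maximal winning probability is attained deterministically. A deterministic strategy assigns to each coset one of its $n$ elements, so I encode Alice's strategy by the indicator $A\colon\01^n\to\{0,1\}$ of the set $S_A=\{a:A(a)=1\}$ of all elements she ever outputs, and Bob's by $B$ with support $S_B$ similarly. Because each player outputs exactly one element per coset and there are $2^n/n$ cosets, both supports have size $2^n/n$, so $\E[A]=\E[B]=1/n$; this ``one element per coset'' structure is precisely what makes each player's output well-defined and unique.

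Next I would rewrite the winning probability. As $u$ ranges uniformly over $\01^n$, Alice's coset is uniform and her output $a$ is uniform over $S_A$; conditioned on this, $z$ remains $\eta$-biased and independent, and the players win exactly when $a\oplus z$ lies in $S_B$, i.e. when $B(a\oplus z)=1$. Setting $\rho=1-2\eta$ and letting $T_\rho$ denote the noise operator $(T_\rho B)(a)=\E_z[B(a\oplus z)]$ (each bit flipped independently with probability $\eta$), this yields
\[
\Pr[\text{win}]=\frac{1}{|S_A|}\sum_{a\in S_A}\E_z[B(a\oplus z)]=\frac{n}{2^n}\sum_a A(a)\,(T_\rho B)(a)=n\,\inpc{A}{T_\rho B},
\]
where the factor $n$ comes from sampling $a$ over the density-$1/n$ support $S_A$ rather than over all of $\01^n$.

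The heart of the argument is a single hypercontractive estimate. Using that $T_\rho$ is self-adjoint and $T_{\sqrt\rho}T_{\sqrt\rho}=T_\rho$, I would split the noise and apply Cauchy--Schwarz,
\[
\inpc{A}{T_\rho B}=\inpc{T_{\sqrt\rho}A}{T_{\sqrt\rho}B}\le\norm{T_{\sqrt\rho}A}_2\,\norm{T_{\sqrt\rho}B}_2,
\]
and then invoke Bonami--Beckner in the form $\norm{T_{\sqrt\rho}f}_2\le\norm{f}_p$, which holds for $p=1+\rho$. Since $A$ is $\{0,1\}$-valued with $\E[A]=1/n$ we have $\norm{A}_{1+\rho}=(1/n)^{1/(1+\rho)}$, and likewise for $B$, so $\inpc{A}{T_\rho B}\le(1/n)^{2/(1+\rho)}$. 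Finally I would simplify the exponent using $1-\tfrac{2}{1+\rho}=\tfrac{\rho-1}{\rho+1}=-\tfrac{\eta}{1-\eta}$ for $\rho=1-2\eta$, obtaining
\[
\Pr[\text{win}]\le n\cdot(1/n)^{2/(1+\rho)}=n^{1-2/(1+\rho)}=1/n^{\eta/(1-\eta)}.
\]

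The hard part is not the hypercontractive step, which is routine once set up, but getting the reduction exactly right: one must verify the well-definedness of each unique output, correctly track the factor $n$ from the density-$1/n$ sampling, and pin down both the noise parameter $\rho=1-2\eta$ and the optimal exponent $p=1+\rho$ in Bonami--Beckner. It is worth emphasizing that, beyond well-definedness and the density $1/n$, the finer coset structure of the Khot--Vishnoi construction plays no role in this particular bound.
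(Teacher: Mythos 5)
Your proposal is correct and follows essentially the same route as the paper's proof: reduce to deterministic strategies, encode each player as a $\{0,1\}$-valued indicator with expectation $1/n$, rewrite the winning probability as $n\,\E[A\cdot T_{1-2\eta}B]$, and bound it by splitting the noise operator, applying Cauchy--Schwarz, and invoking Bonami--Beckner with exponent $1+\rho=2-2\eta$. The only cosmetic difference is that you derive the factor $n$ by sampling $a$ uniformly from Alice's support rather than summing over the subgroup $H$; the computation is the same.
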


\begin{proof}
Recall that the inputs are generated as follows: we choose a uniformly random $u\in\bits^n$
and an $\eta$-biased $z\in\bits^n$, and define the respective inputs to be the cosets $u\oplus H$ and $u\oplus z\oplus H$.
We can assume without loss of generality that Alice and Bob's behavior is deterministic.
Define functions $A,B: \bits^n \to \bits$ by $A(u)=1$ if and only if Alice's output
on $u\oplus H$ is $u$, and similarly for Bob. Notice that by definition, these functions attain the
value $1$ on exactly one element of each coset, and hence $\E_u[A(u)]=\E_u[B(u)]=1/n$. 

Recall that the players win if and only if
the sum of Alice's output and Bob's output equals $z$. Hence for all $u,z$,
$\sum_{h \in H} A(u\oplus h) B(u\oplus z\oplus h)$ equals~$1$ if the players win on input pair $u\oplus H,u\oplus z\oplus H$,
and equals~$0$ otherwise. Therefore, the winning probability is given by
\begin{align*}
\E_{u,z}\biggl[\sum_{h \in H} A(u\oplus h) B(u\oplus z\oplus h)\biggr]&=
\sum_{h \in H}  \E_{u,z}\bigl[A(u\oplus h) B(u\oplus z\oplus h)\bigr] \\
&= n \E_{u,z}\bigl[A(u)B(u\oplus z)\bigr]\,,
\end{align*}
where the second equality uses the fact that for all $h$, $u\oplus h$ is uniformly distributed.

We use the Fourier analysis framework introduced in \expref{Section}{secfourier}.
We now have
\begin{align*}
\E_{u,z}[A(u)B(u\oplus z)] & =  \E_{u}[A(u)(T_{1-2\eta}B)(u)]\\
 & = \inpc{A}{T_{1-2\eta} B}\\[1ex]
 & = \inpc{T_{\sqrt{1-2\eta}}A}{T_{\sqrt{1-2\eta}}B}\\[1ex]
 & \leq  \norm{T_{\sqrt{1-2\eta}}A}_2\cdot\norm{T_{\sqrt{1-2\eta}}B}_2\\[1ex]
 & \leq  \norm{A}_{2-2\eta}\cdot\norm{B}_{2-2\eta}\\[1ex]
 & =  \Big(\E_u[A(u)]\Big)^{1/(2-2\eta)}\cdot\Big(\E_u[B(u)]\Big)^{1/(2-2\eta)}\\
 & =  \frac{1}{n^{1/(1-\eta)}}\,.
\end{align*}
Here the third equality follows from equations~\eqref{eq:plancherel} and~\eqref{eq:effectnoiseonfourier},
the first inequality is by Cauchy-Schwarz, the second is the hypercontractive inequality
(\expref{Theorem}{thbonamibeckner} with $q=2$, $p=2-2\eta$ and $\rho=\sqrt{1-2\eta}$) applied to each of $A$ and $B$, 
and the second to last equality uses the fact that $A,B$ have range~$\bits$.
We complete the proof by noting that $n/n^{1/(1-\eta)}=1/n^{\eta/(1-\eta)}$.
\end{proof}

\subsection{Lower bound on the entangled value}\label{secqlower}

In this section we describe a good entangled strategy for the Khot-Vishnoi game,
following the ideas of Kempe, Regev, and Toner~\cite{KRT08} and the SDP-solution of~\cite{khot&vishnoi:cutproblems}.

\begin{theorem}
For every $n$ that is a power of~2, and every $\eta\in[0,1/2]$, there exists an entangled strategy that wins KV$_n$
with probability at least $(1-2\eta)^2$, using a maximally entangled state of local dimension~$n$.
\end{theorem}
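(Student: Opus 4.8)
The plan is to have Alice and Bob share the $n$-dimensional maximally entangled state $\ket{\psi}=\frac{1}{\sqrt n}\sum_{i=1}^n\ket{i}\ket{i}$ (identifying $[n]$ with $\01^{\log n}$), and to turn each coset into a measurement basis. For a string $s\in\01^n$ I define the \emph{real} unit vector $\ket{v_s}=\frac{1}{\sqrt n}\sum_{i=1}^n(-1)^{s_i}\ket{i}$. A one-line computation gives $\inp{v_s}{v_t}=\frac{1}{n}\sum_i(-1)^{s_i\oplus t_i}=1-\frac{2\,d(s,t)}{n}$, where $d(s,t)$ is the Hamming distance. The key structural fact I would establish first is that for any coset $x=u\oplus H$, the $n$ vectors $\{\ket{v_w}:w\in x\}$ form an orthonormal basis of $\C^n$: two distinct elements $u\oplus h_c,\,u\oplus h_{c'}$ differ by the nonzero Hadamard codeword $h_{c\oplus c'}$, which is balanced (Hamming weight exactly $n/2$), so $\inp{v_{u\oplus h_c}}{v_{u\oplus h_{c'}}}=1-2(n/2)/n=0$; since there are $n$ mutually orthogonal unit vectors, they span $\C^n$.

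The strategy is then immediate: on input coset $x$, Alice measures her half of $\ket{\psi}$ in the basis $\{\ket{v_w}:w\in x\}$ and outputs the label $a$ of the outcome, and on input coset $y$ Bob measures in $\{\ket{v_w}:w\in y\}$ and outputs the label $b$. The next step is to compute the joint outcome statistics. Because every $\ket{v_w}$ is real, the standard overlap for the maximally entangled state carries no conjugation subtlety: $(\bra{v_a}\otimes\bra{v_b})\ket{\psi}=\frac{1}{\sqrt n}\sum_i(v_a)_i(v_b)_i=\frac{1}{\sqrt n}\inp{v_a}{v_b}$, so $\Pr[a,b]=\frac{1}{n}\inp{v_a}{v_b}^2$.

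I would then identify the winning pairs. Writing the inputs as $x=u\oplus H$ and $y=u\oplus z\oplus H$, an element $a=u\oplus h_c\in x$ and $b=u\oplus z\oplus h_{c'}\in y$ satisfy $a\oplus b=z\oplus h_{c\oplus c'}$, so the winning condition $a\oplus b=z$ holds exactly when $c=c'$. There are thus $n$ winning pairs, and each satisfies $\inp{v_a}{v_b}=1-2\,d(a,b)/n=1-2|z|/n$, where $|z|$ denotes the Hamming weight of $z$. Summing, the winning probability for a \emph{fixed} noise string $z$ is $n\cdot\frac{1}{n}(1-2|z|/n)^2=(1-2|z|/n)^2$. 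Finally I would average over the $\eta$-biased $z$: with $|z|=\sum_i z_i$, $\E[|z|]=\eta n$ and $\E[|z|^2]=\eta^2n^2+\eta(1-\eta)n$, a short moment computation yields $\E_z\!\big[(1-2|z|/n)^2\big]=(1-2\eta)^2+\frac{4\eta(1-\eta)}{n}\ge(1-2\eta)^2$, which is the claimed bound (in fact marginally stronger).

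The conceptual heart of the argument — and the step I would treat most carefully — is the orthonormality claim, which rests entirely on every nonzero Hadamard codeword being balanced; this is exactly where the algebraic structure of $H$ enters and what makes each coset a legitimate measurement basis. The only other place requiring attention is the joint-probability formula for the maximally entangled state, but the realness of the $\ket{v_w}$ eliminates the usual transpose/conjugate bookkeeping, so I do not expect it to be an obstacle. Everything remaining is routine binomial moment calculation.
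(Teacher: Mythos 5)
Your proposal is correct and follows essentially the same route as the paper: the same coset-basis measurement of the maximally entangled state, the same computation $\Pr[a,b]=\inpc{v_a}{v_b}^2/n$, and the same per-$z$ winning probability $(1-2|z|/n)^2$. The only (immaterial) differences are that you verify the orthonormality of the coset bases explicitly via balancedness of nonzero Hadamard codewords, and you evaluate $\E_z[(1-2|z|/n)^2]$ exactly by a moment computation where the paper simply invokes convexity.
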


\begin{proof}
For $a\in\bits^n$, let $v^a\in\mathbb{R}^n$ denote the unit vector $((-1)^{a_i}/\sqrt{n})_{i\in[n]}$.
Notice that for all $a,b$, we have $\inpc{v^a}{v^b}=1-2d(a,b)/n$, where $d(a,b)$ denotes the Hamming distance between
$a$ and $b$. In particular, the $n$ vectors $v^a$, as $a$ ranges over a coset of $H$, form an orthonormal basis of $\R^n$.

The entangled strategy is as follows.
Alice and Bob start with the $n$-dimensional maximally entangled state.
Alice, given coset $x=u\oplus H$ as input, performs a projective measurement in the orthonormal basis
given by $\{v^a \mid a \in x\}$ and outputs the value $a$ given by the measurement.
Bob proceeds similarly with the basis $\{v^b \mid b \in y\}$ induced by his coset $y=x\oplus z\oplus H$.
A standard calculation now shows that the probability to obtain the pair of outputs $a,b$
is $\inpc{v^a}{v^b}^2/n$. Since the players win if and only if $b=a\oplus z$, the winning probability on inputs $x,y$ is given by
$$
\frac{1}{n}\sum_{a \in x} \inpc{v^a}{v^{a\oplus z}}^2
\;=\;\frac{1}{n}\sum_{a \in x} \left(1-\frac{2\,d(a,a\oplus z)}{n}\right)^2\;=\;\left(1-\frac{2|z|}{n}\right)^2\,,
$$
where $|z|$ denotes the Hamming weight 
of the $\eta$-biased string $z$.
Taking expectation and using convexity, the overall winning probability is
\[
\E_z\left[\left(1-\frac{2|z|}{n}\right)^2\right]\;\geq\;
\left(\E_z\left[1-\frac{2|z|}{n}\right]\right)^2\;=\;(1-2\eta)^2\,.
\qedhere \]
\end{proof}

\section{Appendix: An alternative strategy for \texorpdfstring{$\HM^{\textsc{nl}}_n$}{HMnl}}\label{sec:alternativeproofhmnl}

Here we give an alternative and slightly weaker version of \expref{Theorem}{thm:lowerHMnl}, with advantage $\Omega(1/\sqrt{n})$ instead of $\Omega(\sqrt{(\log n)/n})$.

\begin{proof}
Fix arbitrary inputs $x,M$. Bob always outputs $i=1$
and $j$
equal to
whatever is matched to $i$ by $M$. Consider the following two unit vectors
in $\R^n$,
$$
u = \left((-1)^{x_1 \oplus x_k}/\sqrt{n} \right)_{k\in [n]}\,,
\qquad \qquad v = e_j\,,
$$
where $e_j$ is the vector with $1$ in the $j$th coordinate and zero elsewhere.
Notice that Alice knows $u$, Bob knows $v$, and that $\inpc{u}{v} =
(-1)^{x_1 \oplus x_j}/\sqrt{n}$. The players use shared randomness to choose
a random unit vector $w \in \R^n$. Bob outputs $d=0$ if $\inpc{w}{v} > 0$,
and $d=1$ otherwise. Alice outputs $a=0^{\log n}$ if
$\inpc{w}{u} > 0$, and a uniform $a \in \bits^{\log n}$
otherwise.

We now analyze the success probability. Assume that $x_1 \oplus x_j = 0$ (the other case
being similar).
It is easy to see that the probability
of both $\inpc{w}{u}$ and $\inpc{w}{v}$ being positive is 
\[
\frac{1}{2} - \frac{1}{2\pi}\arccos \inpc{u}{v}\,,
\]
as this is essentially a two-dimensional question. They have the same probability
of both being negative, and probability $({1}/{2\pi}) \arccos \inpc{u}{v}$ to be in each
of the two remaining cases.
In the two cases that $\inpc{w}{u} \le 0$ (an event that happens with probability $1/2$),
$a \cdot (i \oplus j)$ is a uniform bit (since $i \neq j$) and the players win with
probability exactly $1/2$. Otherwise (\ie, if $\inpc{w}{u} > 0$), the players win if and only if $d = 0$
(\ie, if also $\inpc{w}{v}>0$).
Hence, using that $\arccos(z)=\pi/2-\Theta(z)$ for small $z$, the overall winning probability is
\[
\frac{1}{2} \cdot \frac{1}{2} + \frac{1}{2} - \frac{1}{2\pi}\arccos \inpc{u}{v} = \frac{1}{2} + \Theta\left(\frac{1}{\sqrt{n}}\right)\,.
\qedhere \]
\end{proof}

\paragraph{Acknowledgements.}
We thank Jop Bri\"et, Dejan Dukaric, Carlos Palazuelos, and Thomas Vidick for useful discussions and comments,
and Carlos also for sending us his manuscripts~\cite{palazuelos:superactivation,palazuelos:limitofmaxentstate}.  
We also thank the anonymous ToC referees and John Watrous for their many comments and suggestions, which improved the presentation of the paper.

\bibliographystyle{tocplain}   %
\bibliography{v008a027}

\providecommand{\bibhead}[1]{}
\expandafter\ifx\csname pdfbookmark\endcsname\relax%
  \providecommand{\tocrefpdfbookmark}{}
\else\providecommand{\tocrefpdfbookmark}{%
   \hypertarget{tocreferences}{}%
   \pdfbookmark[1]{References}{tocreferences}}%
\fi

\tocrefpdfbookmark
\begin{thebibliography}{10}

\bibitem{bjk:q1wayj}\bibhead{bjk:q1wayj}
{\sc Ziv Bar-Yossef, T.~S. Jayram, and Iordanis Kerenidis}: Exponential
  separation of quantum and classical one-way communication complexity.
\newblock {\em SIAM J. Comput.}, 38(1):366--384, 2008.
\newblock Preliminary version in
  \href{http://dx.doi.org/10.1145/1007352.1007379}{STOC'04}.
\newblock [\epfmtdoi{10.1137/060651835}]

\bibitem{bell:epr}\bibhead{bell:epr}
{\sc John~Stewart Bell}: On the {E}instein {P}odolsky {R}osen paradox.
\newblock {\em Physics}, 1:195--200, 1964.

\bibitem{Briet2009blv}\bibhead{Briet2009blv}
{\sc Jop Bri{\"e}t, Harry Buhrman, Troy Lee, and Thomas Vidick}: Multiplayer
  {XOR} games and quantum communication complexity with clique-wise
  entanglement.
\newblock {\em Quantum Inf. Comput.}, 2013.
\newblock To appear. Preprint (2009) at
  \href{http://arxiv.org/abs/0911.4007}{arXiv}.

\bibitem{brietvidick11}\bibhead{brietvidick11}
{\sc Jop Bri{\"e}t and Thomas Vidick}: Explicit lower and upper bounds on the
  entangled value of multiplayer {XOR} games.
\newblock {\em Communications in Mathematical Physics}, 2013.
\newblock To appear. Preprint (2011) at
  \href{http://arxiv.org/1108.5647}{arXiv}.
\newblock [\epfmtdoi{10.1007/s00220-012-1642-5}]

\bibitem{chsh}\bibhead{chsh}
{\sc John~F. Clauser, Michael~A. Horne, Abner Shimony, and Richard~A. Holt}:
  Proposed experiment to test local hidden-variable theories.
\newblock {\em Phys. Rev. Lett.}, 23(15):880--884, 1969.
\newblock [\epfmtdoi{10.1103/PhysRevLett.23.880}]

\bibitem{csuu:parallelxorj}\bibhead{csuu:parallelxorj}
{\sc Richard Cleve, William Slofstra, Falk Unger, and Sarvagya Upadhyay}:
  Perfect parallel repetition theorem for quantum {XOR} proof systems.
\newblock {\em Comput. Complexity}, 17(2):282--299, 2008.
\newblock Preliminary version in
  \href{http://dx.doi.org/10.1109/CCC.2007.24}{CCC'07}.
\newblock [\epfmtdoi{10.1007/s00037-008-0250-4}]

\bibitem{dklr:nonsignal}\bibhead{dklr:nonsignal}
{\sc Julien Degorre, Marc Kaplan, Sophie Laplante, and J{\'e}r{\'e}mie Roland}:
  The communication complexity of non-signaling distributions.
\newblock {\em Quantum Inf. Comput.}, 11(7{\&}8):649--676, 2011.
\newblock Preliminary version in
  \href{http://dx.doi.org/10.1007/978-3-642-03816-7_24}{MFCS'09}. Preprint
  (2008-11) at \href{http://arxiv.org/abs/0804.4859}{arXiv}.
\newblock [\epfmt{acm}{2230916.2230924}]

\bibitem{dukaric:norm}\bibhead{dukaric:norm}
{\sc Dejan~D. Dukaric}: The {H}ilbertian tensor norm and its connection to
  quantum information theory.
\newblock 2010.
\newblock [\epfmt{arxiv}{1008.1948v2}]

\bibitem{epr}\bibhead{epr}
{\sc Albert Einstein, Boris Podolsky, and Nathan Rosen}: Can quantum-mechanical
  description of physical reality be considered complete?
\newblock {\em Physical Review}, 47(10):777--780, 1935.
\newblock [\epfmtdoi{10.1103/PhysRev.47.777}]

\bibitem{gavinsky:interactionvsnonlocality}\bibhead{gavinsky:interactionvsnonlocality}
{\sc Dmitry Gavinsky}: Classical interaction cannot replace quantum
  nonlocality.
\newblock 2009.
\newblock [\epfmt{arxiv}{0901.0956}]

\bibitem{gkkrw:1wayj}\bibhead{gkkrw:1wayj}
{\sc Dmitry Gavinsky, Julia Kempe, Iordanis Kerenidis, Ran Raz, and Ronald
  de~Wolf}: Exponential separation for one-way quantum communication
  complexity, with applications to cryptography.
\newblock {\em SIAM J. Comput.}, 38(5):1695--1708, 2009.
\newblock Preliminary version in
  \href{http://dx.doi.org/10.1145/1250790.1250866}{STOC'07}.
\newblock [\epfmtdoi{10.1137/070706550}]

\bibitem{gkrw:identificationj}\bibhead{gkrw:identificationj}
{\sc Dmitry Gavinsky, Julia Kempe, Oded Regev, and Ronald de~Wolf}:
  Bounded-error quantum state identification and exponential separations in
  communication complexity.
\newblock {\em SIAM J. Comput.}, 39(1):1--24, 2009.
\newblock Preliminary version in
  \href{http://dx.doi.org/10.1145/1132516.1132602}{STOC'06}.
\newblock [\epfmtdoi{10.1137/060665798}]

\bibitem{junge&palazuelos:largeviolation}\bibhead{junge&palazuelos:largeviolation}
{\sc Marius Junge and Carlos Palazuelos}: Large violation of {B}ell
  inequalities with low entanglement.
\newblock {\em Communications in Mathematical Physics}, 306(3):695--746, 2011.
\newblock Preprint (2010) at \href{http://arxiv.org/abs/1007.3043v2}{arXiv}.
\newblock [\epfmtdoi{10.1007/s00220-011-1296-8}]

\bibitem{PerezGarcia09arxiv}\bibhead{PerezGarcia09arxiv}
{\sc Marius Junge, Carlos Palazuelos, David {P\'{e}rez-Garc{\'i}a}, Ignacio
  Villanueva, and Michael~M. Wolf}: Unbounded violations of bipartite {B}ell
  inequalities via operator space theory.
\newblock {\em Communications in Mathematical Physics}, 300(3):715--739, 2010.
\newblock Preprint (2009) at \href{http://arxiv.org/abs/0910.4228}{arXiv}.
  Shorter version appeared in
  \href{http://dx.doi.org/10.1103/PhysRevLett.104.170405}{Phys. Rev. Lett.}
\newblock [\epfmtdoi{10.1007/s00220-010-1125-5}]

\bibitem{kkl:influence}\bibhead{kkl:influence}
{\sc Jeff Kahn, Gil Kalai, and Nathan Linial}: The influence of variables on
  {B}oolean functions.
\newblock In {\em Proc. 29th FOCS}, pp. 68--80. IEEE Comp. Soc. Press, 1988.
\newblock [\epfmtdoi{10.1109/SFCS.1988.21923}]

\bibitem{kkmtv:entangledprovers}\bibhead{kkmtv:entangledprovers}
{\sc Julia Kempe, Hirotada Kobayashi, Keiji Matsumoto, Ben Toner, and Thomas
  Vidick}: Entangled games are hard to approximate.
\newblock {\em SIAM J. Comput.}, 40(3):848--877, 2011.
\newblock Preliminary version in
  \href{http://dx.doi.org/10.1109/FOCS.2008.8}{FOCS'08}.
\newblock [\epfmtdoi{10.1137/090751293}]

\bibitem{kkmv:usingent}\bibhead{kkmv:usingent}
{\sc Julia Kempe, Hirotada Kobayashi, Keiji Matsumoto, and Thomas Vidick}:
  Using entanglement in quantum multi-prover interactive proofs.
\newblock {\em Comput. Complexity}, 18(2):273--307, 2009.
\newblock Preliminary version in
  \href{http://dx.doi.org/10.1109/CCC.2008.6}{CCC'08}.
\newblock [\epfmtdoi{10.1007/s00037-009-0275-3}]

\bibitem{KRT08}\bibhead{KRT08}
{\sc Julia Kempe, Oded Regev, and Ben Toner}: Unique games with entangled
  provers are easy.
\newblock {\em SIAM J. Comput.}, 39(7):3207--3229, 2010.
\newblock Preliminary version in
  \href{http://dx.doi.org/10.1109/FOCS.2008.9}{FOCS'08}. Preprint (2007-09) at
  \href{http://arxiv.org/abs/0710.0655}{arXiv}.
\newblock [\epfmtdoi{10.1137/090772885}]

\bibitem{khot:ugc}\bibhead{khot:ugc}
{\sc Subhash Khot}: On the power of unique 2-prover 1-round games.
\newblock In {\em Proc. 34th STOC}, pp. 767--775. ACM Press, 2002.
\newblock [\epfmtdoi{10.1145/509907.510017}]

\bibitem{khot&vishnoi:cutproblems}\bibhead{khot&vishnoi:cutproblems}
{\sc Subhash~A. Khot and Nisheeth~K. Vishnoi}: The unique games conjecture,
  integrality gap for cut problems and embeddability of negative type metrics
  into {$\ell_1$}.
\newblock In {\em Proc. 46th FOCS}, pp. 53--62. IEEE Comp. Soc. Press, 2005.
\newblock [\epfmtdoi{10.1109/SFCS.2005.74}]

\bibitem{ledoux&talagrand}\bibhead{ledoux&talagrand}
{\sc Michel Ledoux and Michel Talagrand}: {\em Probability in {B}anach Spaces}.
\newblock Springer, 1991.

\bibitem{odonnell:survey}\bibhead{odonnell:survey}
{\sc Ryan O'Donnell}: Some topics in analysis of {B}oolean functions.
\newblock Technical Report 055, 2008.
\newblock Available at \href{http://eccc.hpi-web.de/report/2008/055/}{ECCC}.
  Paper for invited talk at
  \href{http://dx.doi.org/10.1145/1374376.1374458}{STOC'08}.

\bibitem{palazuelos:limitofmaxentstate}\bibhead{palazuelos:limitofmaxentstate}
{\sc Carlos Palazuelos}: Bounding the largest {B}ell violation attainable by a
  quantum state.
\newblock 2012.
\newblock [\epfmt{arxiv}{1206.3695}]

\bibitem{palazuelos:superactivation}\bibhead{palazuelos:superactivation}
{\sc Carlos Palazuelos}: Superactivation of quantum nonlocality.
\newblock {\em Phys. Rev. Lett.}, 109(19):190401, Nov 2012.
\newblock Preprint (2012) at \href{http://arxiv.org/abs/1205.3118}{arXiv}.
\newblock [\epfmtdoi{10.1103/PhysRevLett.109.190401}]

\bibitem{PerezGarcia08}\bibhead{PerezGarcia08}
{\sc David {P{\'e}rez-Garc{\'i}a}, Michael~M. Wolf, Carlos Palazuelos, Ignacio
  Villanueva, and Marius Junge}: Unbounded violations of tripartite {B}ell
  inequalities.
\newblock {\em Communications of Mathematical Physics}, 279:455, 2008.
\newblock Preprint (2012) at
  \href{http://arxiv.org/abs/quant-ph/0702189}{arXiv}.
\newblock [\epfmtdoi{10.1007/s00220-008-0418-4}]

\bibitem{regev:bell}\bibhead{regev:bell}
{\sc Oded Regev}: Bell violations through independent bases games.
\newblock {\em Quantum Inf. Comput.}, 12(1-2):9--20, 2012.
\newblock Preprint (2011) at \href{http://arxiv.org/abs/1101.0576}{arXiv}.
\newblock [\epfmt{acm}{2231038}]

\bibitem{tsirelson87}\bibhead{tsirelson87}
{\sc Boris~S. Tsirel'son}: Quantum analogues of the {B}ell inequalities. {T}he
  case of two spatially separated domains.
\newblock {\em J. Math. Sciences}, 36(4):557--570, 1987.
\newblock [\epfmtdoi{10.1007/BF01663472}]

\bibitem{wolf:fouriersurvey}\bibhead{wolf:fouriersurvey}
{\sc Ronald~de Wolf}: A brief introduction to {F}ourier analysis on the
  {B}oolean cube.
\newblock {\em Theory of Computing Library, Graduate Surveys}, 1:1--20, 2008.
\newblock [\epfmtdoi{10.4086/toc.gs.2008.001}]

\end{thebibliography}

\begin{tocauthors}
\begin{tocinfo}[buhrman]
 Harry Buhrman\\
 Professor\\
 CWI and University of Amsterdam\\
 buhrman\tocat{}cwi\tocdot{}nl\\
 \url{http://homepages.cwi.nl/~buhrman}
\end{tocinfo}

\begin{tocinfo}[regev]
  Oded Regev\\
  Professor\\
  Blavatnik School of Computer Science, Tel Aviv University, and CNRS, ENS Paris.\\
  \url{http://www.cs.tau.ac.il/~odedr}
\end{tocinfo}

\begin{tocinfo}[scarpa]
 Giannicola Scarpa\\
 Ph.\,D. student\\
 CWI Amsterdam\\
 g\tocdot{}scarpa\tocat{}cwi\tocdot{}nl\\
 \url{http://homepages.cwi.nl/~scarpa}
\end{tocinfo}

\begin{tocinfo}[dewolf]
 Ronald de Wolf\\
 Professor\\
 CWI and University of Amsterdam\\
 rdewolf\tocat{}cwi\tocdot{}nl\\
 \url{http://homepages.cwi.nl/~rdewolf}
\end{tocinfo}

\end{tocauthors}

\begin{tocaboutauthors}
\begin{tocabout}[buhrman]
\textsc{Harry Buhrman} received his \phd\ from the \href{http://www.english.uva.nl/start.cfm}{University of Amsterdam} in 1993.  His advisors
were \href{http://staff.science.uva.nl/~peter/}{Peter van Emde Boas} and \href{http://www.cs.bu.edu/~homer/}{Steven Homer}.  
He is now group leader of the Algorithms and Complexity group at CWI and holds a part-time position as full professor at the University of Amsterdam.
His interests include quantum computing, complexity theory, and computational biology.  
\end{tocabout}
\begin{tocabout}[regev]
\textsc{Oded Regev} graduated from
  \href{http://www.tau.ac.il/}{Tel Aviv University} in 2001 under the
  supervision of \href{http://www.cs.tau.ac.il/~azar/}{Yossi Azar}.
  He spent two years as a postdoc at the \href{http://www.ias.edu/}{Institute for
    Advanced Study}, Princeton, and one year at the
  \href{http://www.berkeley.edu/}{University of California, Berkeley}.
  He is currently with the cryptography group at the {\'E}cole Normale Sup{\'e}rieure, Paris.  %
  His research interests include quantum computation, computational
  aspects of lattices, and other topics in theoretical computer
  science. He also enjoys photography, especially of his baby girl.
\end{tocabout}
\begin{tocabout}[scarpa]
  \textsc{Giannicola Scarpa} is a \phd\ student at
  \href{http://www.cwi.nl}{CWI}, Amsterdam, supervised by Ronald de
  Wolf.  In 2009, he received a Master's degree in Computer Science
  from the \href{http://www.unisa.it}{University of Salerno},
  Italy. His research interests include quantum computing,
  non-locality, combinatorial optimization, and game theory.  In his
  free time, he is a devoted movie-goer but unsuccessful movie
  maker, he devours short stories, writes some, and he often claims
  he is going to lose weight.
\end{tocabout}
\begin{tocabout}[dewolf]
\textsc{Ronald de Wolf} received his \phd\ from the \href{http://www.english.uva.nl/start.cfm}{University of Amsterdam} and \href{http://www.cwi.nl/}{CWI} in 2001.  His advisors
were \href{http://homepages.cwi.nl/~buhrman/}{Harry Buhrman} and \href{http://homepages.cwi.nl/~paulv/}{Paul Vit\'{a}nyi}.  
After doing a postdoc at the
  \href{http://www.berkeley.edu/}{University of California, Berkeley}, he now holds a permanent position at CWI and 
a part-time position as full professor at the University of Amsterdam.
His CS interests include quantum computing, complexity theory, and learning theory.  
He also holds a degree in philosophy, and enjoys classical music and literature.
\end{tocabout}

\end{tocaboutauthors}

\end{document}